\def\BS{{BS(t,S,\sigma)}}
\def\sig{{\mathfrak{S}}}
\def\Sig{{\Sigma}}
\def\tSig{{\tilde{\Sig}}}
\def\calL{{\mathcal L}}
\def\calM{{\mathcal M}}
\def\calE{{\mathcal E}}
\def\dT{{\Delta T}}
\def\pd{{\partial}}
\def\pade{Pad{\'e} }
\def\frechet{Fr{\`e}chet }
\def\gato{G{\^a}teaux }
\def\nl{{\mathrm{NL}}}
\def\sdt{\sqrt{\Delta T}}
\title{A model-free backward and forward nonlinear PDEs for implied volatility}
\author{
\authorstyle{Peter Carr\textsuperscript{1}},
\authorstyle{Andrey Itkin\textsuperscript{1}} and
\authorstyle{Sasha Stoikov\textsuperscript{2}}
\newline
\newline
\textsuperscript{1}
\institution{Tandon School of Engineering, New York University, 12 Metro Tech Center, 26th floor, Brooklyn NY 11201, USA}
\newline
\textsuperscript{2}
\institution{Cornell University, School of Operations Research and Information Engineering, 2W Loop Road,
New York, NY 10044, USA}
}
\date{\today}
\begin{document}

\maketitle

\lettrineabstract{We derive a backward and forward nonlinear PDEs that govern the implied volatility of a contingent claim whenever the latter is well-defined. This would include at least any contingent claim written on a positive stock
price whose payoff at a possibly random time is convex. We also discuss suitable initial and boundary conditions for those PDEs. Finally, we demonstrate how to solve them numerically by using an iterative finite-difference approach.
}

\vspace{0.5in}

\section*{Introduction}

As this is well-known, given an option contract written, e.g., on some stock, the implied volatility is derived from an option price and shows what the market implies about the underlying stock volatility in the future. For instance, the implied volatility is one of six inputs used in a simple option pricing (Black-Scholes) model, but is the only one that is not directly observable in the market. The standard way to determine it by knowing the market price of the contract and the other five parameters, is solving for the implied volatility by equating the model and market prices of the option contract.

There exist various reasons why traders prefer considering option positions in term of the implied volatility, rather than the option price itself, see e.g., \cite{natenberg1994option}. However, in this paper our goal is not to discuss the importance of this concept. Instead we focus on the way how the implied volatility is computed, and namely - on performance of this process.

As was mentioned, the traditional method of computing the Black-Scholes implied volatility is relatively simple. Perhaps, the main problem with such an approach occurs if one needs to simultaneously compute the implied volatility for a wide set of the model (contract) parameters. Then, the iterative root solver can converge slowly because i) there is no a universal good initial guess to kick off the iterations, and ii) for some regions of the model parameters the solution either doesn't exist at all, or is hard to be found as the sensitivity of the model price to changes in the implied volatility is very low.

Therefore, a surprising idea would be to replace an algebraic equation \eqref{def} used to determine the implied volatility with another object, for instance, a partial differential equation (PDE). Here the surprise means that the PDE is a more complicated object than the algebraic equation, and solving the PDE (especially nonlinear) requires some special methods usually much more complicated than a simple root solver. However, in contrast to the algebraic equation that should be solved independently at every given point e.g., in the space of option strikes and maturities, the PDE approach allows finding the solution simultaneously in many such the points by using a series of marching sweeps in time. Also, solving a linear PDEs  doesn't require iterations, and hence, there is no need for the good initial guess.

In this paper we demonstrate how such a PDE could be derived by using just a general definition of the implied volatility. Moreover, we further extend this idea by deriving several types of these PDEs. For instance, we obtain quasilinear backward and forward parabolic PDEs, and also nonlinear hyperbolic PDEs of the first order. As these PDEs are new and yet have not been investigated in the literature, we need to discuss suitable initial and boundary conditions that provide the solution of these PDE to exist and be unique. We also develop an iterative numerical method to solve the PDEs by using a finite-difference approach. The method is of second order of approximation in both space and time, is unconditionally stable and preserves positivity of the solution. Using this method we compute the PDE implied volaitlity and find that our intuition behind the main idea of the paper is correct. In other words, performance of the finite-difference solver exceeds that of the traditional approach by factor of forty. However, this result is subject to some details which are highlighted in the last Section.

Despite in this paper the PDE approach is considered as an alternative to the traditional method of getting the implied volatility by using the root solver, the latter in turn was seriously improved within the last few years. First, in \cite{Jackel2015} modified Newton iterations were proposed to solve \eqref{def}.  When doing so the entire input domain is decomposed into four areas. Rational approximations are used to provide initial guesses and reduce the number of iterations. Also this method provides an accuracy close to the machine precision.  However, it breaks possible vectorization (but we don't consider this topic in this paper anyway).

In \cite{Glau2018} the approach of \cite{Jackel2015}  is improved by using a bivariate Chebyshev interpolation of the implied volaitlity. This algorithm allows vectorization and consists of two steps. At the offline phase polynomial weights of a low-rank Chebyshev interpolation of the implied volatility surface are computed and stored for four different input domains, using the algorithm of \cite{Jackel2015}. This step is only performed once. During the online phase the input data is split into four domains, and the Chebyshev interpolation is applied to each domain by choosing the precomputed nodes from the offline step.

From this prospective, this problem can also be solved by building an artificial neural network (ANN) and training it on a given data set which will include all parameters  of the contract (the spot price, the time to maturity, the strike, option price, etc.) and output the Black-Scholes implied volatility. The method of \cite{Jackel2015} can be used when training this ANN as the golden "pricer". As such, the approach proposed in this paper becomes, apparently, less important since the ANN training could be done offline, while performance of the "pricer" is less important, rather than the accuracy. This is completely opposite to the proposed PDE approach. However, using the ANN still requires solving various problems, such as no-arbitrage, the existence of derivatives, etc., see \cite{ItkinANN}. Also, the advantage of using the ANN approach vs traditional methods, is, perhaps, a property inherent to many financial models, and not only to the model we consider in this paper.

Aside of practical importance, the existence of the model-free quasilinear parabolic PDE (or the other hyperbolic PDE) governing the implied volatility is an interesting fact, which yet has not been discovered in the literature. Therefore, the novelty of this paper consists in the derived equations together with the proposed numerical method of their solution.

The rest of the paper is organized as follows. In Section~\ref{backwardPDE} we derive a backward quasilinear PDE for the implied volatility. In Section~\ref{forwardEq} a similar approch is used to derive an analogous forward PDE. Section~\ref{bc} discusses how one can set the appropriate initial and boundary conditions for these PDEs. In Section~\ref{otherPDE} we also derive some  other PDEs for the implied volatility that have a different type (nonlinear hyperbolic rather than quasilinear parabolic equations).   Then we construct a numerical method to solve these equations which is described in detail in Section~\ref{solPDE} and two Appendices. The results of our numerical experiments and comparison of two methods: i) the traditional method of finding the implied volatility by using a root solver with the Black-Scholes equation, and ii) the proposed method of solving the derived PDEs numerically, are provided in Section~\ref{results}. In Section~\ref{discussion} we discussed the results obtained and conclude.

\section{Backward PDE} \label{backwardPDE}

Consider a class of contingent claims which have a single payoff and which are written on the spot price path of a single underlying risky asset. The Black-Scholes implied volatility, \cite{hull:97}, can be defined for any such contingent claim, as long as the claim's value in the Black-Scholes model is monotonic in the volatility of the underlying asset. Standard examples include European options and American options that are optimally held alive. If the claim's value is decreasing in volatility, then the value of a short position in that claim is increasing in volatility. Hence, there is no loss in generality if implied volatility is only defined for short or long positions in claims whose value increases with volatility. We refer to any such claim as an option. This section derives a new nonlinear partial differential equation (PDE) that relates the Black-Scholes implied volatility to the option price, to the underlying stock price, and to calendar time.

The Black-Scholes implied volatility is a useful measure, as it is a market practice instead  of quoting the option premium in the relevant currency, the options are quoted in terms of the Black-Scholes implied volatility. Over the years, option traders have developed an intuition in this quantity.  However, it can be further generalized by using a similar concept, but replacing the Black-Scholes framework with another one. For instance, in \cite{Schoutens2009} this is done under a \LY framework, and, therefore, based on distributions that match more closely historical returns. In this paper we don't consider these generalizations, and are concentrated only on the Black-Scholes implied volatility.

Consider a fixed time horizon $t \in [0,T]$ where $T > 0$ is the time to maturity. We assume that the underlying stock price $S$ is strictly positive over this horizon. Let $BS(S,K,\sigma,r,q,t)$ denote the function relating the Black-Scholes model value $BS(S,K,\sigma,r,q,t) \in \mathbb{R}$ of a contingent claim to the underlying stock price level $S > 0$, the strike $K > 0$, the (constant) volatility $\sigma > 0$, calendar time $t \in [0,T]$, the constant interest rate $r$ and continuous dividend $q$. Let subscripts denote partial derivatives. We assume that the function $BS(S,K,\sigma,r,q,t)$ is governed by the Black-Scholes backward PDE. To make the notation easier, further in the expression $BS(S,K,\sigma,r,q,t)$ we will drop all variables not relevant to our particular context. Therefore, instead, in this Section we denote the function value as $BS(S,\sigma,t)$. With this notation, the corresponding Black-Scholes PDE reads, \cite{hull:97}
\begin{equation}
\fp{\BS}{t} + (r-q) S \fp{\BS}{S} + \frac{1}{2}\sigma^2 S^2 \sop{\BS}{S} = r \BS,
\label{pde}
\end{equation}
\noindent defined on some open region ${\cal R}$ of space and time. For example, for a European Call or Put, the region ${\cal R}$ would be the Cartesian product ${\cal R}_E \equiv [0,\infty) \times [0,T)$. As a second example, for a down-and-out Call with a lower barrier $L \in (0,S_0)$, the region ${\cal R}$ would be the Cartesian product ${\cal R}_L \equiv (L,\infty) \times [0,T)$.

We furthermore assume that the Vega of the claim is always strictly positive, i.e. $BS_{\sigma}(S,\sigma,t)>0$ on ${\cal R}$. This clearly holds for a European Call or Put on ${\cal R}_E$, but it can also be shown that it holds for a down-and-out Call on ${\cal R}_L$. We also assume that one can directly observe the market price $Z(T,K)$ of such a claim, and that this market price lies in some arbitrage-free open interval ${\cal A} \subset \mathbb{R}$. For a European Call struck at $K$, ${\cal A}_C \equiv ((Q_T S - D_T K)^+, S Q_T)$, where $D_T = e^{-r T}$ is the discount factor, and $Q_T = e^{-q T}$ is the price appreciation factor. For a European Put struck at $K$, ${\cal A}_P \equiv ((D_T K - Q_T S)^+, D_T K)$. For a down-and-out Call struck at $K$, ${\cal A}_D \equiv ((D_T K - Q_T S)^+, Q_T(S-L))$, \cite{hull:97}.

For any such claim, we can implicitly define a function $\Sigma(S,t,K,T,r,q,Z)$ that relates the implied volatility $\Sigma$ of the contingent claim to $(S,t) \in {\cal R}$ and to the market price of the contingent claim $Z \in {\cal A}$. Again, for easy of notation we drop $K,T, r,q$ from the list of independent variables, and then the definition of $\Sigma(S,t,Z)$ reads
\begin{equation} \label{def}
BS(S,\Sigma(S,t,Z),t)  = Z.
\end{equation}
Let ${\cal D}$ denote the domain for this function, which is defined as the Cartesian product of ${\cal R}$ and ${\cal A}$. Since the PDE \eqref{pde} holds for any level of $\sigma >0$, it holds in particular if we set $\sigma=\Sigma(S,t,Z)$:
\begin{equation}
\Bigg\{ \fp{\BS}{t} + (r-q) S \fp{\BS}{S} + \frac{1}{2}\sigma^2 S^2 \sop{\BS}{S} - r \BS \Bigg\}
\biggr|_{\sigma=\Sigma(S,t,Z)} = 0.
\label{pde1}
\end{equation}
We now show that \eqref{def} and \eqref{pde1} can be used to generate a nonlinear PDE governing $\Sigma(S,t,Z)$ on ${\cal D}$.

First, differentiate \eqref{def} with respect to $t$:
\begin{equation}
BS_t(S,\Sigma(S,t,Z),t) + BS_{\sigma}(S,\Sigma(S,t,Z),t) \Sigma_t(S,t,Z) = 0.
\label{sigt}
\end{equation}

Second, differentiate \eqref{def} with respect to $S$ instead:
\begin{equation}
BS_S(S,\Sigma(S,t,Z),t) + BS_{\sigma}(S,\Sigma(S,t,Z),t) \Sigma_S(S,t,Z) = 0.
\label{sigs}
\end{equation}

In what follows, we will also drop the 3 arguments of $\Sigma$ for notational ease, as it shouldn't give rise to any confusion. Now differentiate \eqref{sigs} with respect to $S$:
\begin{equation}
BS_{SS}(S,\Sigma,t) + 2 BS_{S\sigma}(S,\Sigma,t) \Sigma_S + BS_{\sigma\sigma}(S,\Sigma,t) \Sigma^2_S + BS_{\sigma}(S,\Sigma,t) \Sigma_{SS} = 0.
\label{sigss}
\end{equation}

Substituting \eqref{sigt} and \eqref{sigss} into \eqref{pde1} implies:
\begin{align} \label{pde2}
- BS_{\sigma}(S,\Sigma,t)\Sigma_{t} &- \frac{1}{2}\Sigma^2 S^2 \left[ 2 BS_{S\sigma}(S,\Sigma,t) \Sigma_S + BS_{\sigma\sigma}(S,\Sigma,t) \Sigma^2_S + BS_{\sigma}(S,\Sigma,t) \Sigma_{SS}\right] \\
&+ (r-q) S BS_{S}(S,\Sigma,t) = r BS(S,\Sigma,t). \nonumber
\end{align}
We now show that we can express Vanna $BS_{S\sigma}(S,\Sigma,t)$ and Volga $BS_{\sigma \sigma}(S,\Sigma,t)$ in terms of Vega $BS_{\sigma}(S,\Sigma,t)$. First, differentiate \eqref{def} with respect to $Z$:
\begin{equation} \label{sigz}
BS_{\sigma}(S,\Sigma,t) \Sigma_Z = 1.
\end{equation}
Next, differentiate \eqref{sigz} with respect to $S$:
\begin{equation}
BS_{S\sigma}(S,\Sigma,t) \Sigma_Z + BS_{\sigma\sigma}(S,\Sigma,t) \Sigma_S \Sigma_Z + BS_{\sigma}(S,\Sigma,t). \Sigma_{SZ} = 0.
\label{sigzs}
\end{equation}
Now differentiate \eqref{sigz} with respect to $Z$ instead:
\begin{equation}
BS_{\sigma\sigma}(S,\Sigma,t) \Sigma^2_Z + BS_{\sigma}(S,\Sigma,t) \Sigma_{ZZ} = 0.
\label{sigzz}
\end{equation}
Solving \eqref{sigzz} for Volga $BS_{\sigma \sigma}(S,\Sigma,t)$ yields
\begin{equation}
BS_{\sigma\sigma}(S,\Sigma,t) = -  BS_{\sigma}(S,\Sigma,t) \frac{\Sigma_{ZZ}}{\Sigma^2_{Z}}.
\label{sol}
\end{equation}
Substituting \eqref{sol} into \eqref{sigzs} and solving for Vanna $BS_{S\sigma}(S,\Sigma,t)$ implies:
\begin{equation}
BS_{S\sigma}(S,\Sigma,t) = BS_{\sigma}(S,\Sigma,t) \frac{\Sigma_{S}}{\Sigma^2_{Z}}\Sigma_{ZZ} - BS_{\sigma}(S,\Sigma,t) \frac{\Sigma_{SZ}}{\Sigma_{Z}}.
\label{sol1}
\end{equation}
Substituting \eqref{sol} and \eqref{sol1} into \eqref{pde2} yields
\begin{align} \label{pde4}
0 &= BS_{\sigma}(S,\Sigma,t)  \Bigg\{
\Sigma_{t} + \frac{1}{2} \Sigma^2 S^2 \left[ \left(\frac{\Sigma_S}{\Sigma_Z} \right)^2 \Sigma_{ZZ} - 2 \frac{\Sigma_ S}{\Sigma_Z}  \Sigma_{SZ} + \Sigma_{SS}\right]  \Bigg\} \\
&- (r-q) S BS_{S}(S,\Sigma,t) + r Z. \nonumber
\end{align}
Also from \eqref{sigz} we can express the option Vega via $\Sigma_Z$, from \eqref{sigs} - the option Delta, and substitute these expressions into \eqref{pde4} to obtain
\begin{align} \label{pde4-2}
0 &= BS_{\sigma}(S,\Sigma,t)  \Bigg\{
\Sigma_{t} + \frac{1}{2} \Sigma^2 S^2 \left[ \left(\frac{\Sigma_S}{\Sigma_Z} \right)^2 \Sigma_{ZZ} - 2 \frac{\Sigma_ S}{\Sigma_Z}  \Sigma_{SZ} + \Sigma_{SS}\right]
+ (r-q) S \Sigma_S + r Z \Sigma_Z  \bigg\}.
\end{align}

Therefore, if the option Vega is non-zero, the backward PDE for the implied volatility $\Sigma$ finally takes the form
\begin{align} \label{pde4r}
\Sigma_{t} &+ \frac{1}{2} \Sigma^2 S^2 \left[ \left(\frac{\Sigma_S}{\Sigma_Z} \right)^2 \Sigma_{ZZ} - 2 \frac{\Sigma_ S}{\Sigma_Z}  \Sigma_{SZ} + \Sigma_{SS}\right]
+ (r-q) S \Sigma_{S} + r Z \Sigma_Z = 0.
\end{align}
This is a quasilinear\footnote{A partial differential equation is said to be quasilinear if it is linear with respect to all the highest order derivatives of the unknown function, \cite{Pinchover2005}.} PDE of the parabolic type.

When the Black-Scholes model Delta of a contingent claim is known in closed form, there is an alternative quasilinear PDE that governs the function $\Sigma(S,Z,t)$ on ${\cal D}$. Solving \eqref{sigz} for Vega $BS_{\sigma}(S,\Sigma,t)$, substituting  the result into \eqref{sigs}, and then solving for Delta $BS_S(S,\Sigma,t)$ implies:
 \begin{equation} \label{delta}
 BS_S(S,\Sigma,t) = - \frac{\Sigma_S}{\Sigma_Z}.
  \end{equation}

 Hence \eqref{pde4} can also be written as:
  \begin{equation} \label{pde4a}
  \Sigma_{t} + \frac{1}{2} \Sigma^2 S^2 \left[ BS^2_S(S,\Sigma,t) \Sigma_{ZZ} + 2 BS_S(S,\Sigma,t) \Sigma_{SZ} + \Sigma_{SS}\right] + (r-q) S \Sigma_{S} + r Z \Sigma_Z = 0.
  \end{equation}
To illustrate, it is well known that for a European Call, Delta $BS_S(S,\Sigma,t)$ is known in closed form:
 \begin{equation} \label{Nd1}
 BS_S(S,\Sigma,t) = N(d_1(S,\Sigma,t)),
  \end{equation}
 \noindent where:
 \begin{equation} \label{calldelta}
 d_1(S,\Sigma,t) \equiv \frac{\ln\frac{S}{K} + \frac{\Sigma^2}{2} T}{\Sigma \sqrt{T}},
 \end{equation}
 \noindent and $N(x)$ is the normal CDF.
 Hence, for a European Call, the quasilinear PDE \eqref{pde4a} can also be represented as another quasilinear PDE:
 \begin{equation}
 \Sigma_{t} + \frac{1}{2} \Sigma^2 S^2 \left[ {\cal N}^2(d_1(S,\Sigma,t)) \Sigma_{ZZ} + 2 {\cal N} (d_1(S,\Sigma,t))  \Sigma^2_{SZ} + \Sigma_{SS}\right] + (r-q) S \Sigma_{S} + r Z \Sigma_Z = 0,
 \label{pde5}
 \end{equation}
 \noindent whose domain is the Cartesian product of ${\cal R}_E$ and ${\cal A}_C$, which is
 $\mathcal{D} = (0,\infty) \times (0,T) \times ((Q_T S - D_T K)^+, Q_T S).$ Similar quasilinear PDE's can be derived for European Puts or down-and-out Calls.

For an American put option with a positive early exercise premium, in the continuation region the relevant PDE is \eqref{pde4r}, since neither the value nor  the Delta is known in closed form.

\section{Forward equation} \label{forwardEq}

In practice it is often necessary to simultaneously compute the Black-Scholes implied volatility of many options written on the same underlying, but having different strikes and maturity. This is to construct so-called the implied volatility surface, given market quotes $Z(K,T)$ and the market values of $S,r,q$. However, it is well-known that solving this problem by using the backward PDE is time-consuming since for every pair $K,T$ a separate backward PDE has to be solved. In contrast, the forward equation can be used to efficiently do this in one sweep, see e.g., \cite{Dupire:94, Gatheral2006}.

Having this in mind, in this Section we derive  a forward PDE for the implied volatility $\Sigma(S,T,K, r,q,Z)$. For easy of notation we will write it as $\Sigma(T,K,Z)$, thus dropping parameters that don't change.  The derivation can be done in a way similar to that in Section~\ref{backwardPDE}. Therefore, here we omit a detailed algebra, and provide just short explanations.

Again, we can start with the definition of the implied volatility $\Sigma(K,T,Z)$ in \eqref{def}, which in new variables read
\begin{equation} \label{defF}
BS(K,\Sigma(K,T,Z),T)  = Z,
\end{equation}
\noindent where $\Sigma(K,T,Z) \in [0,\infty)$, and the domain of definition for $(K,T,Z)$ is ${\cal R}_{F,E} \equiv [0,\infty) \times (0,\infty)\times \mathcal{A}_C$.

To proceed, we need a forward PDE for the European option price $BS(K,T, \sigma) $ which is also known as Dupire's equation, \cite{Dupire:94}
\begin{align} \label{Dup}
\fp{BS(K,T, \sigma) }{T} &=   \frac{1}{2} \sigma^2(K) K^2 \sop{BS(K,T, \sigma) }{K}  - (r-q) K \fp{BS(K,T, \sigma) }{K} - q BS(K,T, \sigma).
\end{align}
 Since the PDE \eqref{Dup} holds for any level of $\sigma >0$, it holds, in particular, if we set $\sigma=\Sigma(K,T,Z)$.
 In what follows, we will drop the 3 arguments of $\Sigma$ for notational ease.

Differentiating \eqref{defF} with respect to $T$, $K$ and twice with respect to $K$ yields
\begin{align} \label{sigtF}
0 &= BS_T(K,\Sigma,T) + BS_{\sigma}(K,\Sigma,T) \Sigma_T, \\
0 &= BS_K(K,\Sigma,T) + BS_{\sigma}(K,\Sigma,T) \Sigma_K, \nonumber \\
0 & = BS_{KK}(K,\Sigma,T) + 2 BS_{K\sigma}(K,\Sigma,T) \Sigma_K + BS_{\sigma\sigma}(K,\Sigma,T) \Sigma^2_K + BS_{\sigma}(K,\Sigma,T) \Sigma_{KK}. \nonumber
\end{align}

Substituting \eqref{sigtF} into \eqref{Dup} implies:
\begin{align} \label{pde2F}
0 &= BS_{\sigma}(K,\Sigma,T)\Sigma_{T} - \frac{1}{2}\Sigma^2 K^2 \left[ 2 BS_{K\sigma}(K,\Sigma,T) \Sigma_K + BS_{\sigma\sigma}(K,\Sigma,T) \Sigma^2_K + BS_{\sigma}(K,\Sigma,T) \Sigma_{KK}\right]  \nonumber \\
&+ (r-q) K BS_{\sigma}(K,\Sigma,T) \Sigma_K - q BS(K,\Sigma,T).
\end{align}

Again, derivatives $BS_{K\sigma}(K,\Sigma,T)$ and $BS_{\sigma \sigma}(K,\Sigma,T)$ can be expressed in terms of $BS_{\sigma}(K,\Sigma,T)$. Differentiating \eqref{defF}, first with respect to $Z$, and then with respect either to $K$, or to $Z$, yields
\begin{align} \label{sigzF}
1 & = BS_{\sigma}(K,\Sigma,T) \Sigma_Z, \\
0 & = BS_{K\sigma}(K,\Sigma,T) \Sigma_Z + BS_{\sigma\sigma}(K,\Sigma,T) \Sigma_K \Sigma_Z + BS_{\sigma}(K,\Sigma,T) \Sigma_{KZ}, \nonumber \\
0 &= BS_{\sigma\sigma}(K,\Sigma,T) \Sigma^2_Z + BS_{\sigma}(K,\Sigma,T) \Sigma_{ZZ}. \nonumber
\end{align}
The last equation in this system can be solved for $BS_{\sigma\sigma}(K,\Sigma,T) $, and then the second one - for
$BS_{K\sigma}(K,\Sigma,T)$.  Substituting these solutions into \eqref{pde2F} yields
\begin{align} \label{pde4F}
0 &= BS_{\sigma}(K,\Sigma,T)  \Bigg\{
\Sigma_{T} - \frac{1}{2} \Sigma^2 K^2 \left[ \left(\frac{\Sigma_K}{\Sigma_Z} \right)^2 \Sigma_{ZZ} - 2 \frac{\Sigma_ K}{\Sigma_Z}  \Sigma_{KZ} + \Sigma_{KK}\right]
+ (r-q) K \Sigma_K \Bigg\} - q Z.
\end{align}
Using the first line in \eqref{sigzF}, this could be re-written as
\begin{align} \label{pde4F-2}
0 &= BS_{\sigma}(K,\Sigma,T)  \Bigg\{
\Sigma_{T} - \frac{1}{2} \Sigma^2 K^2 \left[ \left(\frac{\Sigma_K}{\Sigma_Z} \right)^2 \Sigma_{ZZ} - 2 \frac{\Sigma_ K}{\Sigma_Z}  \Sigma_{KZ} + \Sigma_{KK}\right]
+ (r-q) K \Sigma_K  - q Z \Sigma_z \bigg\}.
\end{align}
If the option Vega is non-zero, we finally obtain the forward PDE for the implied volatility $\Sigma(K,T,Z)$
\begin{align} \label{pde4Fin}
0 &= \Sigma_{T} - \frac{1}{2} \Sigma^2 K^2 \left[ \left(\frac{\Sigma_K}{\Sigma_Z} \right)^2 \Sigma_{ZZ} - 2 \frac{\Sigma_ K}{\Sigma_Z}  \Sigma_{KZ} + \Sigma_{KK}\right]
+ (r-q) K \Sigma_K - q Z \Sigma_Z.
\end{align}
This PDE has to be solved subject to some initial and boundary conditions which are discussed in Section~\ref{bc}.

Since we consider the European options,  the Black-Scholes model strike delta of a contingent claim is known in closed form. Then, similar to the previous Section, the PDE in \eqref{pde4Fin} could be slightly simplified. Indeed, as follows from the second line of \eqref{sigtF} and first line of \eqref{sigzF}
\begin{equation} \label{deltaF}
BS_K(S,\Sigma,T) = - \frac{\Sigma_K}{\Sigma_Z}.
\end{equation}
Hence \eqref{pde4Fin} can also be re-written as:
\begin{equation}
0 = \Sigma_{T} - \frac{1}{2} \Sigma^2 K^2 \left[ BS_K^2(K, \Sigma,T) \Sigma_{ZZ} + 2
BS_K(K, \Sigma,T)\Sigma_{KZ} + \Sigma_{KK}\right] + (r-q) K \Sigma_K - q Z \Sigma_Z.
\label{pde4aF}
\end{equation}
For instance, for a European Call, the strike delta $BS_K(K,\Sigma,T)$ can be found in closed form. Indeed, since the Black-Scholes European Call option price is a homogeneous function of order 1 in $(S,K)$ it is easy to find that
\begin{equation} \label{deltaK}
K BS_K(K,\sigma,T) = BS(K,\sigma,T) - S BS_S(K,\sigma,T) = - K e^{-r T} N(d_2(K,\sigma,T)),
\end{equation}
\noindent where $d_2 = d_1 - \sigma \sqrt{T}$.

\section{Initial and boundary conditions} \label{bc}

The non-linear PDE \eqref{pde4Fin} describes evolution of the function $\Sigma(K,T,Z) \in [0,\infty)$  at the domain $\mathcal{R}_{F,E}$, and has to be solved subject to some initial and boundary conditions. However, this immediately brings some problems.

Indeed, the initial condition for \eqref{pde4Fin} has to be set at $T=0$. From the Black-Scholes formula we know that the option Vega $BS_{\sigma}(K,\Sigma,T)$ vanishes at $T=0$. Therefore, the expression in curly braces in \eqref{pde4F} could have any value. Also, the equation $BS_{\sigma}(K,\Sigma,T) = 0$ doesn't have a solution for $\Sigma$. This is because at $T=0$ function $BS(K,\Sigma,0)$ gets its intrinsic value (e.g. for the Call option -  $BS(K,\Sigma,0) = (S-K)^+$) which doesn't depend on $\Sigma$. Accordingly, for the ITM options there is no $\Sigma$ which would make the intrinsic value zero, and for the OTM options any $\Sigma$ could be chosen.

To resolve this, we make a change of the dependent variable $\Sigma(K,T,Z) \mapsto \sig(K,T,Z) = \Sigma(K,T,Z)\sqrt{T}$. This pursues two goals. First, at $T=0$ a natural initial condition implies $\sig(K,0,Z) = 0$. Second, if we know $\sig(K,T,Z)$ given some values of $(K,T,Z)$, the value of $\Sigma(K,T,Z)$ can be easily restored.

The forward PDE for the new dependent variable $\sig(K,T,Z)$ can be derived from \eqref{pde4Fin}, which yields
\begin{align} \label{pdeNew}
\sig_{T} &= \frac{1}{2 T} \sig^2 K^2 \left[ \left(\frac{\sig_K}{\sig_Z} \right)^2 \sig_{ZZ} - 2 \frac{\sig_ K}{\sig_Z}  \sig_{KZ} + \sig_{KK}\right] - (r-q) K \sig_K + q Z \sig_Z + \frac{1}{2 T}\sig.
\end{align}
Since \eqref{pdeNew} is a 2D parabolic PDE (quasilinear), the boundary conditions must be set $\forall T > 0$.

Suppose that we consider an European Call defined at the domain $\Omega: [K,Z] \in [0,\infty]\times ((Q_T S - D_T K)^+, Q_T S)$ which is depicted in Fig.~\ref{domainOm}. A special attention should be drawn to two of these boundaries: $Z=0$ and $Z = Q_T S - D_t K$ - to decide whether the PDE needs any boundary condition at them.
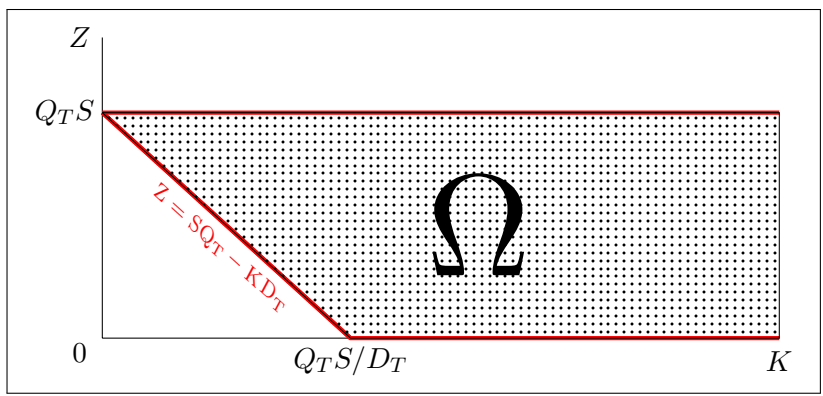
\begin{figure}[!htb]
\captionsetup{format=plain}
\begin{center}
\fbox{
\begin{tikzpicture}
\def\sizeV{4.};
\def\sizeH{9.};
\def\lA{2.5};
\def\lB{3.};
\def\lBB{3.3};
\def\lAB{2.};
\def\lBA{1.5};
\def\sh{0.03};
\def\sH{0.03};
\def\dis{1.5};

\draw (0,0) -- (\sizeH,0)
      (0,0) -- (0,\sizeV);
\draw[red, ultra thick] (0,\lB) -- (\lBB,0) -- (\sizeH,0);
\draw[red, ultra thick] (0,\lB) -- (\sizeH,\lB);
\draw[red, ultra thick] (0,\lB) -- (\lBB,0) node[midway,sloped,below,xslant=-0.5] {\scriptsize $Z = S Q_T - K D_T$};

\node at (\sizeH,-0.3){$K$};
\node at (-0.3,\sizeV){$Z$};
\node at (-0.3,-0.2){$0$};
\node at (-0.5,\lB){$Q_T S$};
\node at (\lBB,-0.3){$Q_T S/D_T$};
\node[scale=5] at (\sizeH/2+0.5,\lB/2){$\Omega$};

\draw[pattern=dots] (0,\lB) -- (\sizeH,\lB) -- (\sizeH,0) -- (\lBB,0) -- (0, \lB);
\end{tikzpicture}
}
\caption{The domain $\Omega:  [K,Z] \in [0,\infty]\times \mathcal{A}_C$.}
\label{domainOm}
\end{center}
\end{figure}

As mentioned in \cite{ItkinCarrBarrierR3}, the correct boundary conditions are determined by the speed of the diffusion term as we approach the boundary in a direction normal to the boundary. To illustrate, consider a PDE
\begin{equation} \label{oleinik}
C_t = a(x)C_{xx} + b(x)C_x + c(x) C,
\end{equation}
\noindent where $C= C (t,x)$ is some function of the time $t$ and the independent variable $x \in [0,\infty)$, $a(x), b(x), c(x) \in C^{2}$ are some known functions  of $x$. Then, as shown by \cite{OleinikRadkevich73}, no boundary condition is required at $x = 0$ if $\lim_{x \rightarrow 0}[b(x) - a_x(x)] \ge 0$. In other words, the convection term
at $x=0$ is flowing upwards and dominates as compared with  the diffusion term. A well-known example of such consideration is the Feller condition as applied to the Heston model, see, e.g., \cite{Lucic2008}.

To make it clear, no boundary condition means that instead of the boundary condition at $x \rightarrow 0$ the
PDE itself should be used at this boundary with coefficients $a(0), b(0), c(0)$.

In Section~\ref{solPDE} we show how to solve \eqref{pdeNew} by using temporal discretization that gives rise to the following equation
\begin{align} \label{pdeNewDis}
\sig_{T} &= \frac{1}{2 T} \tilde{\sig}^2 K^2 \left[ \left(\frac{\tilde{\sig}_K}{\tilde{\sig}_Z} \right)^2 \sig_{ZZ} - 2 \frac{\tilde{\sig_ K}}{\tilde{\sig}_Z}  \sig_{KZ} + \sig_{KK}\right] - (r-q) K \sig_K + q Z \sig_Z + \frac{1}{2 T}\sig.
\end{align}
Here $\sig = \sig(T+\Delta T, K,Z)$, $\tilde{\sig} = \tilde\sig(T, K,Z)$, and $\Delta T$ is the step of the temporal grid in $T$.
The \eqref{pdeNewDis} is a linear PDE since all coefficients are already known (they are either given, e.g., $r,q, T$, or
known from the previous time step). Thus, \eqref{pdeNewDis} is a 2D version of \eqref{oleinik}.

Accordingly, in the $K$ direction  with allowance for \eqref{sigtF},\eqref{deltaK}, the condition $\lim_{K \to 0} b(K) - a'(K) \ge 0$ reads
\begin{align*}
\lim_{K \to 0} & \left[b(K) - a'(K) \right] = \lim_{K \to 0} \left\{- \left[ (r-q)K +  \frac{\tilde{\sig} K}{T} \left( \tilde{\sig} + K \tilde{\sig}_K \right) \right] \right\} \\
&= \lim_{K \to 0} \left\{- K \left[ r-q +  \frac{\tilde{\sig}}{T} \left(\tilde{\sig} -
 K \sqrt{T} \frac{BS_K(K,\Sigma,T)}{BS_{\sigma}(K,\Sigma,T)}  \right) \right] \right\} \nonumber \\
&=  \lim_{K \to 0} \left\{- K \left[ r-q +  \frac{\tilde{\sig}}{T} \left(\tilde{\sig} +  \frac{N(d_2(K,\sigma,T))}{\Phi(d_2(K,\Sigma,T))}\right) \right] \right\} = 0, \qquad
\Phi(x) = N'(x), \nonumber
\end{align*}
\noindent because $\lim_{K \to 0} \tilde{\sig}(K,T,Z) = 0, \ T > 0$. Therefore, as per \cite{OleinikRadkevich73}, no boundary condition should be set at $K=0$ as the PDE itself serves as the correct boundary condition.

However, our domain $\Omega$ includes the only point with $K=0$, while the left boundary is a line $Z = Q_T S - D_T K$. It can be easily checked that along this line $\sig = 0$ which is the required boundary conditions.

In the $Z$ direction, again using \eqref{sigzF}, we obtain
\begin{align}
\frac{1}{2} K^2 & \fp{}{Z}\left( \frac{\tilde{\sig} \tilde{\sig}_K}{\tilde{\sig}_Z}\right)^2 = K^2 \frac{\tilde{\sig} \tilde{\sig}^2_K}{\tilde{\sig}_Z}\left( 1 + \frac{\tilde{\sig} \tilde{\sig}_{KZ}}{\tilde{\sig}_Z \tilde{\sig}_K} - \frac{\tilde{\sig \sig_{ZZ}}}{\tilde{\sig}^2_Z}.
\right) \\
&= K^2 T \frac{\tSig \tSig^2_K}{\tSig_Z}\left( 1 + \frac{\tSig \tSig_{KZ}}{\tSig_Z \tSig_K} - \frac{\tSig \tSig_{ZZ}}{\tSig^2_Z} \right) =
K^2 T \frac{\tilde{\Sig} \tilde{\Sig}^2_K}{\tilde{\Sig}_Z}\left( 1 + \tSig \frac{BS_{K,\sigma}(K,\tSig,T)}{BS_K(K,\tSig,T)} \right) \nonumber \\
&= K^2 T \frac{\tilde{\Sig} \tilde{\Sig}^2_K}{\tilde{\Sig}_Z}\left( 1 - \tSig  \frac{BS_{\sigma,K}(K,\tSig,T)}{D_T N(d_2(K,\tSig,T))} \right) =
K^2 T \frac{\tilde{\Sig} \tilde{\Sig}^2_K}{\tilde{\Sig}_Z}\left( 1 - \tSig  \frac{\Phi(d_2(K,\tSig,T)) d_1(K,\tSig,T)}{D_T N(d_2(K,\tSig,T))} \right). \nonumber
\end{align}
From \eqref{sigzF}, $\Sig_Z = 1/BS_\sigma(K,\tSig,T) > 0$. Therefore,
\begin{align*}
\lim_{Z \to 0} & \left[b(Z) - a'(Z) \right] = \lim_{Z \to 0} \left[q Z - K^2 \frac{\tilde{\Sig} \tilde{\Sig}^2_K}{\tilde{\Sig}_Z}\left(1 - \tSig  \frac{\Phi(d_2(K,\tSig,T)) d_1(K,\tSig,T)}{D_T N(d_2(K,\tSig,T))} \right) \right] \\
&= - K^2  \lim_{Z \to 0} \left[\tSig \tilde{\Sig}^2_K BS_\sigma(K,\tSig,T) \left(1 - \tSig  \frac{\Phi(d_2(K,\tSig,T)) d_1(K,\tSig,T)}{D_T N(d_2(K,\tSig,T))}\right) \right]
\nonumber
\end{align*}
Since at $Z=0$ the domain $\Omega$ is a straight line $K D_T \ge S Q_T$, it can be checked that the value of $d_1(K,\tSig,T)$ along this line is always negative, while $\tSig \ne 0$ except the point $K = S Q_T/D_T$. Therefore, what remains to be checked is the behavior of the option Vega as $Z \to 0$.
The implied volatility $\tSig$ which solves the equation $Z=0$ can be found numerically, and is not zero at $K D_T > S Q_T$. Therefore, Vega is positive (despite very small). Thus, $\lim_{Z \to 0} \left[b(Z) - a'(Z) \right] < 0$.
Therefore, as per \cite{OleinikRadkevich73}, we need a boundary condition at the boundary $Z=0$. This condition can be found by solving the equation
\begin{equation} \label{u0}
Z = BS(K,\Sig,T)=0,
\end{equation}
\noindent with respect to $\Sigma$. A typical example of the behavior of this solution is given in Fig.~\ref{solZ0} which is computed using the following input data: $r = 0.02, q = 0.01$ and two cases with $(S = 1.0, T = 0.1)$ and $(S=100, T=1.0)$.

\begin{figure}[!htb]
\begin{center}
\fbox{\includegraphics[width = 0.7\textwidth]{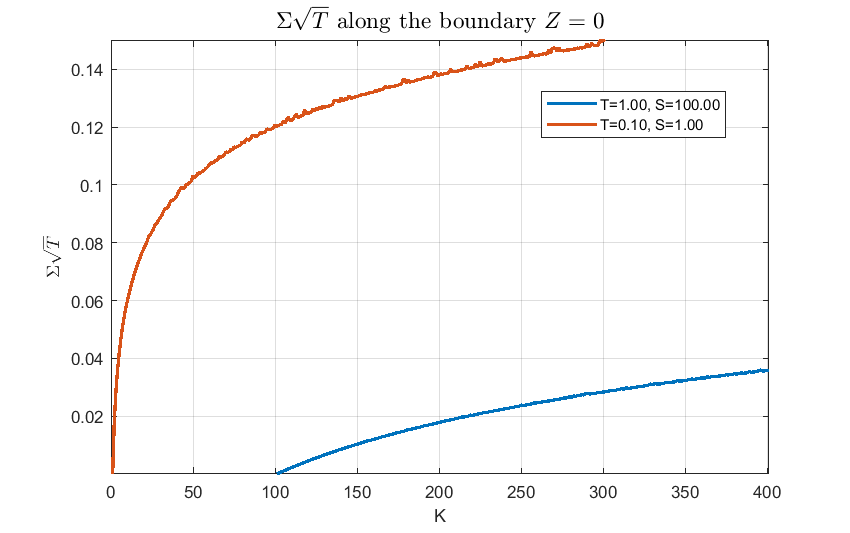}}
\caption{Implied volatility $\sig$ obtained by solving the equation $Z = 0$ for the European Call option.}
\label{solZ0}
\end{center}
\end{figure}

\paragraph{Boundary conditions at $K \to \infty$:}
To set this boundary condition we need to know an asymptotic behavior of the implied volatility when $K \to \infty$. This was a subject of an intensive research for the last decade, and several useful results are available in the literature, see, e.g., \cite{Lee2004,BenaimFritz2009, Archil2010} and references therein. In particular,  the following asymptotic formula was obtained in \cite{Archil2010}
\begin{equation} \label{largeK}
\sig(K) = \sqrt{2}\left[ \sqrt{\log \frac{K}{F} + \log \frac{S Q_T}{C(K)}} - \sqrt{\log \frac{S Q_T}{C(K)}} \right] +
O\left( \left(\log \frac{S Q_T}{C(K)}\right)^{-1/2} \log \log \frac{S Q_T}{C(K)} \right),
\end{equation}
 \noindent as $K \to \infty$. Here, $C(K)$ is the Call option price corresponding to the given $K,T$, and $F = S Q_t/D_T$ is the forward price. One can observe, that in our case $C(K) = Z$. Therefore, as the boundary condition at $K \to \infty$ we can use
 \begin{equation} \label{largeK1}
 \sig(K) = \sqrt{2}\left[ \sqrt{\log \frac{K}{F} + \log \frac{S Q_T}{Z}} - \sqrt{\log \frac{S Q_T}{Z}} \right].
 \end{equation}
 Since $Z \le S Q_T$, the logarithm $\log (S Q_T/Z)$ is always non-negative.

 However, this asymptotics cannot be used along the whole vertical line of the domain $\Omega$ at $K \to \infty$ in Fig.~\ref{domainOm}. For instance, at $Z = S Q_T$ the reminder of the approximation in \eqref{largeK} is large. In this case one can solve numerically  the equation
 \begin{equation} \label{v1}
BS(K,\sig,T) = Z.
\end{equation}

\paragraph{Boundary conditions at $Z = S Q_T$:}
The condition $Z=Q_T S$ implies that $N(d_1(K,T,\sig)) = 1$ and $N(d_2(K,T,\sig)) = 0$. This means that $\sig \to \infty$. To resolve this when constructing a numerical solution, one can move this boundary to $Q_T (S - \Delta S)$, where $\Delta S \ll S$. Then, using a Taylor series expansion of $BS(K,\sig,T)$ around $\sig \to \infty$, we can find the solution of the equation
\begin{equation} \label{u1}
BS(K,\sig,T) = Q_T (S - \Delta S).
\end{equation}
Taking into account first 8 terms in this expansion, we arrive at the following equation
\begin{align} \label{deltaS}
0 &= \sqrt{2 k}  \Big[-6 \left(\sig^2-20\right) \log^4(k)+24 \left(\sig^4-12 \sig^2+240\right) \log^2(k)+\log^6(k) \\
&-48 \left(\sig^6-4 \sig^4+48 \sig^2-960\right)\Big] +24 \sqrt{\pi } \delta  e^{\frac{\sig^2}{8}} \sig^7, \nonumber
\end{align}
\noindent where $k = S Q_T/(K D_T), \ \delta = \Delta S Q_T/(K D_T)$, so $\delta \ll k$. This equation has three roots, and usually one is negative. Among the other two positive roots we need the smallest one, which approximately corresponds to the solution of \eqref{deltaS} with $\delta = 0$. Then this root aligns with the asymptotics of the solution at large $K$ considered in the previous paragraph. Setting $\delta = 0$ in \eqref{deltaS} gives rise to a cubic algebraic equation which can be solved analytically. Alternatively, \eqref{u1} can be solved numerically.

\section{Another type of PDE} \label{otherPDE}

The \eqref{pdeNew} can be further transformed into another type of the PDE. Indeed, the expression in square brackets in \eqref{pdeNew} can be represented as
\begin{equation} \label{sqBr}
A(T,K,Z,\sig) = \left(\frac{\sig_K}{\sig_Z} \right)^2 \sig_{ZZ} - 2 \frac{\sig_ K}{\sig_Z}  \sig_{KZ} + \sig_{KK} = \sig_Z
\left[ \fp{ \left( \dfrac{ \sig_K} {\sig_Z}\right)}{K}  - \frac{1}{2} \fp{\left(\dfrac{\sig_K}{\sig_Z} \right)^2}{Z} \right].
\end{equation}
From \eqref{deltaF}, and \eqref{deltaK} we have
\begin{equation} \label{sigKZ}
\frac{\sig_K}{\sig_Z} = - BS_K(S,\sig,T) = e^{-r T} N(d_2(K,\sig)),
\end{equation}
Substituting \eqref{sigKZ} into \eqref{sqBr}, and taking into account that $\sig = \sig(K,T,Z)$, we obtain
\begin{align} \label{sqBr2}
A(T,K,Z) &= a(T,K,\sig) \sig_Z \Big[ b(T,K,\sig) \sig_Z + c(T,K,\sig)\sig_K + d(T,K,\sig) \Big], \\
a(T,K,\sig) &= \frac{1}{\sqrt{2\pi}} \exp \left(- r T - \frac{1}{2}d_2(K,\sig)^2\right), \nonumber \\
c(T,K,\sig) &= \fp{d_2(K,\sigma,T)}{\sigma}\Big|_{\sigma \sqrt{T} = \sig}, \qquad
b(T,K,\sig) = - e^{-r T} N(d_2(K,\sig)) c(T,K,\sig), \nonumber \\
d(T,K,\sig) &= \fp{d_2(K,\sigma,T)}{K}\Big|_{\sigma \sqrt{T} = \sig}. \nonumber
\end{align}
With this expressions, the PDE in \eqref{pdeNew} takes the form
\begin{align} \label{pdeNew2}
\sig_{T} &= \frac{a(T,K,\sig) K^2}{2 T} \sig^2 \sig_Z \left[b(T,K,\sig) \sig_Z + c(T,K,\sig)\sig_K + d(T,K,\sig)\right] \\
&- (r-q) K \sig_K + q Z \sig_Z + \frac{1}{2 T}\sig. \nonumber
\end{align}
In contrast to \eqref{pdeNew}, this is a nonlinear \textit{hyperbolic} PDE. It can be used as an alternative to \eqref{pdeNew} which is a quasilinear \textit{parabolic} PDE. Accordingly, the PDE in \eqref{pdeNew2} requires just the boundary conditions at lines $K \in [S Q_t/D_T, \infty)$ and $Z = (S Q_T - D_T K)^+$, see Fig.~\ref{domainOm}.

The \eqref{pdeNew2} can be further simplified by using \eqref{sigKZ} that yields
\begin{align} \label{pdeNew3}
\sig_{T} &= -\Theta(T,K,\sigma)\Big|_{\sigma \sqrt{T} = \sig} \sig_Z + \frac{1}{2 T}\sig,
\end{align}
\noindent where $\Theta(T,K,\sigma)$ is the option Theta. This, of course, also follows from \eqref{sigtF}, \eqref{sigzF}, so \eqref{pdeNew3} can be obtained directly from these equations. The \eqref{pdeNew} is a one-dimensional non-linear PDE with $K$ being a dummy variable. In other words, for any fixed $K$ this is a one-dimensional non-linear PDE with $T,Z$ be the independent variables.

\section{Numerical method} \label{solPDE}

In this Section we describe a numerical method used to solve either the PDE in \eqref{pdeNew} or in \eqref{pdeNew2}. The method is constructed similar to how this is done in \cite{Itkin2018}.

First, we re-write \eqref{pdeNew} and \eqref{pdeNew2} in the operator form by introducing an operator $\calL(T,K,Z):  [0,\infty)  \mapsto [0,\infty)$. With this notation \eqref{pdeNew} can be represented as
\begin{equation} \label{pdeNewOp}
\sig_{T} = \calL(T,K,Z) \sig,
\end{equation}
\noindent where
\begin{align} \label{opParab}
\calL(T,K,Z) &= \frac{1}{2 T} \sig^2 K^2 \left[ \left(\frac{\sig_K}{\sig_Z} \right)^2 \pd^2_Z - 2 \frac{\sig_ K}{\sig_Z}  \pd_K \pd_Z  + \pd^2_K \right] - (r-q) K \pd_K + q Z \pd_Z + \frac{1}{2 T},
\end{align}
Similarly, \eqref{pdeNew2} can be re-written in the same form with
\begin{align}
\calL(T,K,Z) &= \frac{a(K,T,Z) K^2}{2 T} \sig^2 \sig_Z \left[b(K,T,Z) \pd_Z + c(K,T,Z)\pd_K + d(K,T,Z)\right] \\
&- (r-q) K \pd_K + q Z \pd_Z + \frac{1}{2 T}. \nonumber
\end{align}

Suppose we now discretize the time to maturity $T \in [0, \infty)$ by creating a uniform grid in time with step $\Delta T$, so our discrete time is defined at the points $[0, \Delta T, 2 \Delta T,\ldots)$. By using Taylor series expansion it can be shown that the discrete solution of the \eqref{pdeNewOp} up to $O((\Delta T))$ could be represented in the form
\begin{equation} \label{Scheme}
\sig(T+\Delta T, K, Z) = e^{\Delta T \mathcal{L}(T,K,Z)} \sig(T, K, Z)
\end{equation}
\noindent where $e^{\Delta T \mathcal{L}(T,K,Z)} $ is the operator exponent, \cite{hochschild1981basic}.

Again, using the Taylor series expansion, one can verify that the following scheme
\begin{equation} \label{Scheme2}
\sig(T+\Delta T, K, Z) = e^{\Delta T \mathcal{L}(T+\Delta T,K,Z)} \sig(T, K, Z)
\end{equation}
\noindent also approximates \eqref{pdeNewOp} up to in $O((\Delta T))$. Combining them together we obtain the scheme
\begin{equation} \label{SchemeF}
\sig(T+\Delta T, K, Z) = e^{\frac{1}{2}\Delta T \left[\mathcal{L}(T+\Delta T,K,Z)+ \mathcal{L}(T,K,Z)\right]} \sig(T, K, Z),
\end{equation}
 \noindent which approximates \eqref{pdeNewOp} up to $O((\Delta T)^2)$.

Now we setup an iterative algorithm to solve this discrete equation. This algorithm is a version of the fix-point Picard iteration algorithm, and relies on the Banach fixed-point theorem, \cite{FPT2003}:
\begin{enumerate}
\item At the first iteration we start by setting $\mathcal{L}(T+\Delta T,K,Z) = \mathcal{L}(T,K,Z)$ as the initial guess.
Thus, \eqref{SchemeF} transforms to \eqref{Scheme}, which could be represented in the form
\begin{equation} \label{k1}
\sig^{(1)}(T+\Delta T, K, Z) = e^{\Delta T \mathcal{L}(T,K,Z)}\sig(T, K, Z),
\end{equation}
\noindent where in $\sig^{(k)}$ the superscript $^{(k)}$ marks the value of $\sig$ found at the $k$-th iteration of the numerical procedure. As the exponent $\mathcal{L}(T,K,Z)$ in the right hands size of \eqref{k1} is computed at the known time $T$, it is a \textit{linear} operator with coefficients being the known functions of $(K,T,Z)$.
In particular, when $T=0$ these functions are determined by the initial condition. If $T>0$ they are determined by the solution at the previous time step.

The \eqref{k1} could be solved by either computing the discrete matrix exponential defined on some grid in $(K,Z)$ (which could be done with complexity $O(N^3 M + N M^3)$, $N$ is the number of the grid points in $K$ direction, and $M$ is the number of the grid points in $Z$ direction), or by using any sort of \pade approximation of the exponential operator. For instance, the \pade approximation $(1,1)$ leads to the well-known Crank-Nicholson scheme which could be solved with the linear complexity in $N$ and $M$, and provides $O((\Delta T)^2)$ approximation in time, see \cite{ItkinBook}.
\newline

\item To proceed we represent \eqref{SchemeF} in the form
\begin{align} \label{kExp}
\sig^{(k)}(T+\Delta T, K, Z) = e^{\frac{1}{2}\Delta T \left[\mathcal{L}^{(k-1)}(T+\Delta T,K,Z)+ \mathcal{L}(T,K,Z)\right]} \sig^{(k-1)}(T, K, Z), \qquad k > 1.
\end{align}
Therefore, the next approximation $\sig^{(k)}(T+\Delta T, K, Z)$ to $\sig(T+\Delta T, K, Z)$ can be found again by either computing the matrix exponential, or by using some \pade approximation.
\end{enumerate}

Having this machinery in hands we can proceed in the same manner until the entire procedure converges, i.e. when the condition
\begin{equation*}
 \| \sig^{(k)}(T+\Delta T, K, Z)  - \sig^{(k-1)}(T+\Delta T, K, Z)\| < \varepsilon
\end{equation*}
\noindent is reached after $k$ iterations with $\varepsilon$ being the method tolerance, and $\|\cdot\|$ being some norm, e.g., $L^2$.

Note, that \eqref{kExp} with the accuracy $O((\dT)^2)$ can be re-written in the form
\begin{align} \label{pdeNewOp1}
\sig_{T}^{(k)} &= \frac{1}{2}\left[\calL(T,K,Z) + \calL^{(k-1)}(T+\Delta T,K,Z)\right]\sig^{(k-1)}  = \tilde\calL^{(k-1)} \sig^{(k-1)},
\end{align}
\noindent where tilde in the new notation $\tilde\calL^{(k-1)}$ means that we take a half of sum of the operators $L$ at times $T$ and $T+\Delta T$, and $^{(k-1)}$ means that at time level $T+\Delta T$ we take the value of the operator at the $(k-1)$-th iteration.

\subsection{Coordinate transformation}

Since our PDEs are defined in the domain $\Omega$ which has a trapezoidal shape, we cannot directly apply a finite-difference (FD) method for solving  \eqref{kExp}, while finite elements or Radial Basis Functions methods can be used here with no problem. However, a simple trick makes application of the FDM possible. The trick utilizes the fact that the right vertical boundary of $\Omega$ lies at $K \to \infty$. Thus, based on the asymptotic formula \eqref{largeK1}, in this limit $\sig \to \infty$ as well, regardless of the value of $Z \in [0,S Q_T/D_T]$. Therefore, without any loss of accuracy, the domain $\Omega$ could be replaced with the domain $\tilde\Omega$ depicted in Fig.~\ref{domainOm1}.
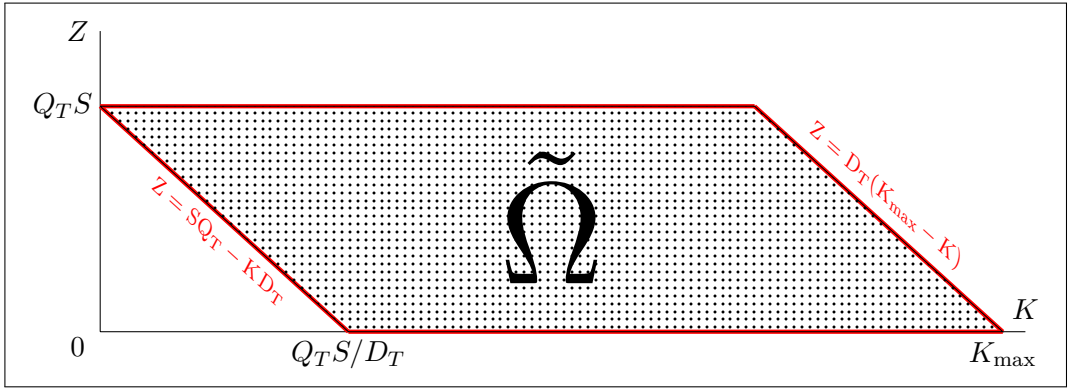
\begin{figure}[!htb]
\captionsetup{format=plain}
\begin{center}
\fbox{
	\begin{tikzpicture}
	\def\sizeV{4.};
	\def\sizeH{12.};
	\def\A{2.5};
	\def\B{3.};
	\def\BB{3.3};
	\def\AB{2.};
	\def\BA{1.5};
	\def\sh{0.03};
	\def\sH{0.03};
	\def\dis{1.5};
	
	\draw (0,0) -- (\sizeH+0.3,0)
	(0,0) -- (0,\sizeV);
	\draw[red, ultra thick]	(0,\B) -- (\sizeH-\BB,\B);
	\draw[red, ultra thick]	(\BB,0) -- (\sizeH,0);	
	\draw[red, ultra thick] (0,\B) -- (\BB,0) node[midway,sloped,below,xslant=-0.5] {\scriptsize $Z = S Q_T - K D_T$};
	\draw[red, ultra thick] (\sizeH-\BB,\B) -- (\sizeH,0) node[midway,sloped,above,xslant=-0.5] {\scriptsize $Z = D_T (K_{\max} - K)$};

	\node at (\sizeH,-0.3){$K_{\max}$};
	\node at (\sizeH+0.3,+0.3){$K$};
	\node at (-0.3,\sizeV){$Z$};
	\node at (-0.3,-0.2){$0$};
	\node at (-0.5,\B){$Q_T S$};
	\node at (\BB,-0.3){$Q_T S/D_T$};
	\node[scale=5] at (\sizeH/2,\B/2){$\tilde\Omega$};
	
	\draw[pattern=dots] (0,\B) -- (\sizeH-\BB,\B) -- (\sizeH,0) -- (\BB,0) -- (0, \B);
	\end{tikzpicture}
}
\caption{The domain $\tilde{\Omega}:  \{K \in [0,\infty), Z \in [0,S Q_T]$.}
\label{domainOm1}
\end{center}
\end{figure}

By construction, the domain $\tilde \Omega$ is a parallelogram, and $K \in [0,K_{\max}]$, where $K_{\max}$ is some fixed strike. We introduce it to truncate an infinite strip in $K$ to some fixed computational area.

Now, the parallelogram could be transformed to rectangle by using, e.g., the following affine transformation (for introduction to this topic see, e.g., \cite{Katsumi1994})
\begin{equation} \label{transform}
Z \mapsto S Q_T u, \qquad K \mapsto \frac{S Q_T}{D_T}(1-u) + \left(K_{\max} - \frac{S Q_T}{D_T}\right)v.
\end{equation}
This transformation converts the domain $\tilde \Omega$ into a unit square with the following map of the boundaries
\begin{align}
Z = S Q_T - K D_T &\mapsto v = 0, \\
Z = D_T (K_{\max} - K) &\mapsto v = 1, \nonumber \\
Z = S Q_T &\mapsto u = 1, \nonumber \\
Z = 0 &\mapsto u = 0. \nonumber
\end{align}
Accordingly, we need to re-write the operator $\calL(T,K,Z)$ in new variables $(u,v)$. For instance, for the parabolic operator \eqref{opParab} this yields
\begin{align} \label{uvOp}
\calL(T,v,u) &= \frac{1}{2 T} \sig^2 K_1^2 \left[
\sig^2_u \pd^2_v
- 2 \sig_u \sig_v  \pd_u \pd_v
+ \sig_v^2 \pd^2_u \right]
- [(r-q) v + B] \pd_v + q u \pd_u + \frac{1}{2 T}, \\
B &= \frac{Q_T S (r (1-u) - q)}{D_T K_{\max} - S Q_T}, \nonumber \\
K_1 &=\frac{c_1 (1 - u) + a v}{a \sig_u + c_1 \sig_v}, \quad a = \frac{1}{S Q_T}, \quad c_1 = \frac{1}{D_T K_{\max} - S Q_T }, \nonumber \\
\end{align}
Since $\tilde \Omega$ - the domain of definition of the operator $\calL(T,v,u)$ in \eqref{uvOp} - is a unit square, now the PDE can be solved by using a FD method.

\subsection{Splitting in spatial dimensions}

Since computation of matrix exponent is time-consuming, especially in case of a two-dimensional operator $L$, we proceed with using a splitting technique, see \cite{LanserVerwer,Duffy,ItkinBook} and references therein. For instance, in \cite{Itkin2017} an ADI (alternative direction implicit) scheme proposed in \cite{HoutWelfert2007} for the solution of a backward PDE was extended twofold. First, a similar ADI scheme has been proposed for the forward equation which equalizes the results obtained by solving backward and forward PDEs. Second, it replaces the first explicit step of the ADI, applied to the mixed derivative term of the PDE, with the fully implicit step. This is especially important for the forward PDE as it increases the stability of the solution and guarantees its positivity.

However, for the PDE in \eqref{pdeNewOp1} using this ADI approach is not a trivial problem, mainly because of various problems with a stable approximation of the mixed derivative term at the second sweep of the ADI. Therefore, instead of the ADI method,
here we use the Strang splitting, \cite{LanserVerwer,Duffy,ItkinBook}. The main idea of this approach is as follows.

With allowance for \eqref{uvOp}, let us represent the \eqref{pdeNewOp1} in the new coordinates in the form
\begin{align} \label{splitting}
\sig_{T}^{(k)} &= \left[ \tilde\calL_u^{(k-1)} + \tilde\calL_v^{(k-1)} + \tilde\calL_{uv}^{(k-1)} \right]\sig^{(k-1)}, \\
\calL_u &= \frac{1}{2 T} \sig^2 K_1^2 \sig_v^2 \pd^2_u  + q u \pd_u + \frac{1}{4 T}, \nonumber \\
\calL_v &= \frac{1}{2 T} \sig^2 K_1^2 \sig_u^2 \pd^2_v - [(r-q) v + B] \pd_v + \frac{1}{4 T}, \nonumber \\
\calL_{uv} &= -\frac{1}{T} \sig^2 K_1^2 \sig_u \sig_v \pd_u \pd_v. \nonumber
\end{align}
Thus, the operator $\calL_u$ includes all differentials with respect to $u$ and also a half of the killing term in \eqref{uvOp}. This is a linear convection-diffusion operator with state and time dependent coefficients. The operator $\calL_v$ includes all differentials with respect to $v$ and a half of the killing term in \eqref{uvOp}, and also is a linear convection-diffusion operator with state and time dependent coefficients. Finally, the operator $\calL_{uv}$ includes only a mixed derivatives term in
\eqref{uvOp}, and also is a linear diffusion operator with state and time dependent coefficients.

Equation \eqref{splitting} can be solved numerically by using the Strang splitting scheme
\begin{align} \label{fwSplit}
\sig^{(k)} (T+\Delta T, u, v) &= e^{\frac{\Delta T}{2} \tilde\calL_u^{(k-1)}}
e^{\frac{\Delta T}{2} \tilde\calL_v^{(k-1)}}
e^{\Delta T \tilde\calL_{uv}^{(k-1)}}
e^{\frac{\Delta T}{2} \tilde\calL_v^{(k-1)}}
e^{\frac{\Delta T}{2} \tilde\calL_u^{(k-1)}}
\sig^{(k-1)} (T+\Delta T, u, v) \\
&+ O((\Delta T)^2). \nonumber
\end{align}
This scheme can also be represented as a sequence of \textit{fractional} steps, which read
\begin{align} \label{fwSplitFin}
\sig^{(k)}_1 &= e^{\frac{\Delta T}{2} \tilde\calL_u^{(k-1)}} \sig^{(k-1)}, \\
\sig^{(k)}_2 &= e^{\frac{\Delta T}{2} \tilde\calL_v^{(k-1)}} \sig^{(k)}_1, \nonumber \\
\sig^{(k)}_3 &= e^{\Delta T \tilde\calL_{uv}^{(k-1)}} \sig^{(k)}_2, \nonumber \\
\sig^{(k)}_4 &= e^{\frac{\Delta T}{2} \tilde\calL_v^{(k-1)}} \sig^{(k)}_3, \nonumber \\
\sig^{(k)} &= e^{\frac{\Delta T}{2} \tilde\calL_u^{(k-1)}} \sig^{(k-1)}_4. \nonumber
\end{align}
We underline that the Strang splitting provides the second order approximation in $\dT$ only if the PDE at every fractional step in \eqref{fwSplitFin} is solved also with the second order approximation in $\Delta T$. To achieve that at every fractional step we use the \pade approximation $(1,1)$ to obtain a Crank-Nicholson scheme in time. For instance, for the first line in \eqref{fwSplitFin} it gives the following scheme
\begin{equation} \label{CN}
\left(1 - \frac{1}{4} \Delta T \tilde\calL_u^{(k-1)} \right)\sig^{(k)}_1 =
\left(1 + \frac{1}{4} \Delta T \tilde\calL_u^{(k-1)} \right)\sig^{(k-1)}.
\end{equation}

\subsection{Spatial discretization}

To apply this algorithm we construct an appropriate discrete nonuniform grid ${\mathbf G}(x)$ in the $(u,v)$ space, see \cite{ItkinBook}. In each spatial direction $i \in [u,v]$ a nonuniform grid contains $N_i+1$ nodes ($p_0, p_1,...,p_{N_i}$) with the steps $h_1 = p_1-p_0, ..., h_{N_i} = p_{N_i} - p_{N_i-1}$. Accordingly, in the operator \eqref{uvOp} we replace continuous derivatives with their finite-difference approximations, and denote these discrete operators defined on ${\mathbf G}(x)$ as $\triangledown$. Thus, we replace $\pd_x$ with $\triangledown_x$, etc.

\subsubsection{Approximation of derivatives}

Once a non-uniform grid is constructed, the first and second derivatives can be approximated on this grid. The corresponding expressions are obtained by using the Taylor series expansions. Here for the reference we provide just the final results, see, e.g., \cite{ItkinBook, HoutFoulon2010}:

Let $f : \mathbb{R} \to \mathbb{R}$ be any given function, let ${x_i}, i \in \mathbb{Z}$ be any given increasing sequence of mesh points, and $h_i = x_i - x_{i-1}, \forall i$. To approximate the first derivatives of $f(x)$ employ the following formulas.
\begin{flalign} \label{approxD}
&\mbox{\textit{Backward scheme:}} \nonumber &\\
& \triangledown^{1B}_{x} = \alpha_{-2} f(x_{i-2}) + \alpha_{-1} f(x_{i-1}) + \alpha_{0} f(x_{i}) = f'(x_i) + O((h_i + h_{i-1})^2), &\\
&\mbox{\textit{Forward scheme:}} \nonumber &\\
& \triangledown^{1F}_{x} = \gamma_{0} f(x_{i}) + \gamma_{1} f(x_{i+1}) + \gamma_{2} f(x_{i+2}) =
f'(x_i) + O((h_{i+1} + h_{i+2})^2). \nonumber &\\
&\mbox{\textit{Central scheme:}} \nonumber &\\
& \triangledown^{1C}_{x} = \beta_{-1} f(x_{i-1}) + \beta_{0} f(x_{i}) + \beta_{1} f(x_{i+1}) = f'(x_i) + O((h_i + h_{i+1})^2). \nonumber
\end{flalign}
Here the super index $^{1X}$ means the first derivatives with the $X$ approximation, $X \in [B,F,C]$ could be backward (B), forward (F) and central (C), and the coefficients $\alpha, \beta, \gamma$ read
\begin{align*}
	\alpha_{-2} &= \dfrac{h_i}{h_{i-1}(h_i + h_{i-1})}, \quad
	\alpha_{-1} = - \dfrac{h_{i-1}+h_i}{h_i h_{i-1}}, \quad
	\alpha_{0} = \dfrac{h_{i-1} + 2 h_i}{h_{i}(h_{i-1} + h_i)},
	\\
	\gamma_{0} &= - \dfrac{2 h_{i+1} + h_{i+1}}{h_{i+2}(h_{i+2} + h_{i+1})}, \quad
	\gamma_{1} = \dfrac{h_{i+2}+h_{i+1}}{h_{i+2}h_{i+1}}, \quad
	\gamma_{2} = - \dfrac{h_{i+1}}{h_{i+2}(h_{i+2} + h_{i+1})}. \nonumber
	\\
	\beta_{-1} &= - \dfrac{h_{i+1}}{h_{i}(h_{i+1} + h_{i})}, \quad
	\beta_{0} = \dfrac{h_{i+1}-h_{i}}{h_{i+1}h_{i}}, \quad
	\beta_{1} = \dfrac{h_{i}}{h_{i+1}(h_{i+1} + h_{i})}. \nonumber
\end{align*}
As follows from \eqref{approxD}, these expressions provide approximation of the first derivatives with the second order $O(h^2)$.

To approximate the second derivatives of $f(x)$ one can employ the following formulas.
\[
\triangledown^{2C}_{x} = \triangledown^{1F}_{x}\triangledown^{1B}_{x} = \triangledown^{1B}_{x}\triangledown^{1F}_{x}, \quad \triangledown^{2B}_{x} = \triangledown^{1B}_{x}\triangledown^{1B}_{x}, \quad \triangledown^{2F}_{x} = \triangledown^{1F}_{x} \triangledown^{1F}_{x}.
\]
These expressions provide approximation of the second derivatives also with the second order $O(h^2)$. For instance, for $\triangledown^{2C}_{x} $ we have in the explicit form
\[ \triangledown^{2C}_{x} = \delta_{-1} f(x_{i-1}) + \delta_{0} f(x_{i}) + \delta_{1} f(x_{i+1}) =
f''(x_i) + O((h_i + h_{i+1})^2), \]
\noindent where
\begin{align*}
	\delta_{-1} &=  \dfrac{2}{h_{i}(h_{i+1} + h_{i})}, \quad
	\delta_{0} = -\dfrac{2}{h_{i+1}h_{i}}, \quad
	\delta_{1} = \dfrac{2}{h_{i+1}(h_{i+1} + h_{i})}. \nonumber
\end{align*}

Let $f : \mathbb{R}^2 \to \mathbb{R}^2$ be any given function of two variables $(x,y)$.
To approximate the mixed derivative $\partial_{x,y} f(x_i, y_j)$ at any point $(x_i, y_j)$ one can consequently apply either one-sided approximations in each direction, or apply the central difference approximation first in $x$ and then in $y$ (or vice versa).
The latter results in the FD approximation of the mixed derivative using a 9-points stencil while still providing the second order of approximation. However, as shown in \cite{Itkin2017}, this discretization could be unstable. Therefore, instead a fully implicit scheme was proposed which, as applied to our problem, is discussed in the next Section.

For the future we need some additional notation. We denote a matrix of $\triangledown^{1F}_x$ as $A^{2F}_{1,x}$; the matrix of  $\triangledown^{1B}_x$ - as $A^{2B}_{1,x}$; the matrix of $\triangledown^{1C}_x$ - as $A^{2C}_{1,x}$
\footnote{Hence, e.g. for $A^{2F}_{1,x}$, the superscript means: backward second order approximation, and subscript means: first order derivative.}. For the first order approximations we use the following definitions. Define a one-sided \textit{forward} discretization of $\triangledown$, which we denote as $A^{1F}_{1,x}: \ A^{1F}_{1,x} C(x) = [C(x+h,t) - C(x,t)]/h$. Also define a one-sided \textit{backward} discretization of $\triangledown$, denoted as $A^{1B}_{1,x}: \ A^{1B}_{1,x} C(x) = [C(x,t) - C(x-h,t)]/h$. These discretizations provide first order approximation in $h$, e.g., $\partial_x C(x) = A^{1F}_{1,x} C(x) +  O(h)$.  Also $I_x$ denotes a unit matrix.

\subsubsection{Approximation of the operator $\calL(T,v,u)$}

Here we use the same approach as in \cite{Itkin2018}. Denote $A^{2C}_{2,x}$ a matrix of the discrete operator $\triangledown^{2C}_{x}$ on the grid ${\mathbf G}(x)$. Observe, that $A^{2C}_{2,x}$ is the Metzler matrix, see \cite{BermanPlemmons1994,ItkinBook}. Indeed, it has all negative elements on the main diagonal, and all nonnegative elements outside of the main diagonal. Also $A^{2C}_{2,x}$ is a tridiagonal matrix.

By the properties of the Metzler matrix $M$ its exponent is a nonnegative matrix. Therefore, operation ${e^M} \sig$ preserves positivity of vector $\sig$. Also all eigenvalues of the Metzler matrix have a negative real part. Therefore, the spectral norm of the matrix follows
\[ \| e^M \| < 1. \]
Thus, if in \eqref{kExp} the operator $\mathcal{M} = \tilde\calL^{(k-1)}$ on the grid ${\mathbf G}(x)$ is represented by the Metzler matrix $A_\calM$, the map $e^{A_\calM}$ is contractual, and, hence, the entire scheme \eqref{kExp} is stable.

By the definition of the operator $\tilde\calL^{(k-1)}$ in \eqref{splitting}, it consists of several terms.

\paragraph{Second derivatives.} The first term
$\frac{1}{2 T} \sig^2 K_1^2  \sig^2_u \pd^2_v$ can be approximated on the grid as
\[A_{2,v} = \frac{1}{2 T} \sig^2 K_1^2  \sig^2_u A^{2C}_{2,v}. \]
As $T > 0$, obviously $A_{2,v}$ is the Metzler matrix. Same is true for the matrix $A_{2,u}$:
\[A_{2,u} = \frac{1}{2 T} \sig^2 K_1^2  \sig^2_v A^{2C}_{2,u}. \]

\paragraph{The mixed derivative.} The mixed derivative term in \eqref{splitting} reads: $ -\frac{1}{T}\sig^2 K_1^2 \sig_u \sig_v  \pd_u \pd_v$. Observe, that from \eqref{transform} we have
\begin{align*}
\sig_u &= \frac{S Q_T}{D_T} \left(D_T \sig_Z - \sig_K \right), \qquad
\sig_v = \sig_K \left(K_{\max} - \frac{S Q_T}{D_T} \right),
\end{align*}
\noindent from \eqref{sigzF}
\[ \sig_Z = \frac{\sqrt{T}}{BS_{\sigma}(K,\Sigma,T) } \ge 0, \]
\noindent and from \eqref{deltaF}, \eqref{deltaK}
\[ \sig_K = - \sig_Z BS_K(S,\Sigma,T) = \sig_Z D_T N(d_2(K,\Sigma,T)) \ge 0. \]
With the allowance for these expressions, it follows that
\begin{align} \label{sigU}
\sig_u &= \frac{\sqrt{T}}{BS_{\sigma}(K,\Sigma,T) } S Q_T \left[1 - N(d_2(K,\Sigma,T)) \right] = s R(d_2)  \ge 0, \\
 \sig_v &= \frac{\sqrt{T}}{BS_{\sigma}(K,\Sigma,T) }  N(d_2(K,\Sigma,T))  (D_T K_{\max} - S Q_T)  =
 \left(\frac{K_{\max}}{K} - s \right) \left(\sqrt{2 \pi} e^{d_2^2/2} -  R(d_2) \right) \ge 0, \nonumber \\
 s &= \frac{S Q_T}{K D_T}, \qquad d_2 = d_2(K,\Sigma,T), \nonumber
\end{align}
\noindent
\noindent where $R(x) \equiv \dfrac{1-N(x)}{N'(x)}$ is the Mills ratio for the standard normal distribution. It
can be efficiently computed by the particularly simple continued fraction representation at $ x > 1$
\begin{equation}
R(x) = \cfrac{1}{x + \cfrac{1}{x + \cfrac{2}{x + ...} } },
\end{equation}
\noindent or by using Taylor series expansion at $ 0 \le x \le 1$, see \cite{MillsRatio2013}.

Thus, $\sig_u \sig_v \ge 0$. Therefore, to have  $A_{2,uv}$ to be the Metzler matrix, we need the matrix of $\sig_u \sig_v$ to be the negative of the Metzler matrix
\footnote{Rigorously speaking, the matrix $A_{2,uv}$ should be the negative of the EM-matrix, where the latter stays for the Eventually M-Matrix. For the introduction into the EM matrices and their properties see \cite{ItkinBook}.},
again see \cite{ItkinBook}. However, the standard methods fail to provide discretization of the second order that guarantees this feature. Therefore, we need another approach, similar to what was proposed in \cite{Itkin2017}. We discuss this in Appendix~\ref{app1}.

As compared with \cite{Itkin2017}, the method in this paper has two innovations. First, in \cite{Itkin2017} a first order approximation in time step was constructed, while here we provide the second order approximation. Second, in \cite{Itkin2017} a mixed derivative term has the form $\rho_{u,v}f(v)g(u) \partial_u \partial_v$, where $f(u)$ is some function of $u$ only, and $g(v)$ is some function of $v$ only. Here we deal with a more general term $\rho_{u,v}f(u,v) \partial_u \partial_v$, and show that the approach of \cite{Itkin2017} can be applied in this case as well.

\paragraph{Convection terms.} The next step is to construct a suitable approximation for the convection terms. The term $q u \partial_u$  admits the \text{forward} approximation $\triangledown^{1F}_{u}$ as $q \ge 0$ and $u \in [0,1]$. Approximating the term $- [(r-q) v + B] \pd_v$ is a more delicate issue and depends on the sign of the function $F(u,v) = (r-q) v + B$. Suppose $r > q$. A typical behavior of function $F(u,v)$ at $T=0.01$ is presented in Fig.~\ref{convecF} at the values of model parameters given in Table~\ref{inputs}.

\begin{figure}[!htb]
\begin{center}
\captionsetup{width=.75\linewidth}
\fbox{\includegraphics[totalheight=2.6in]{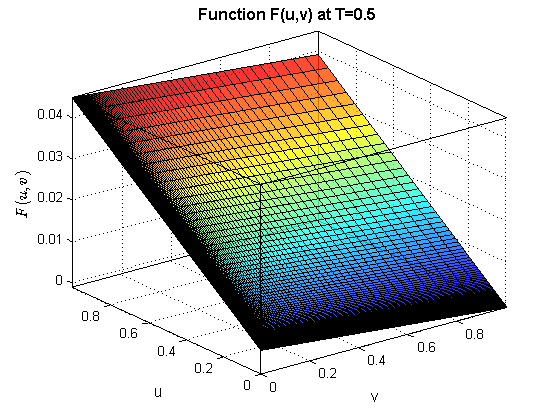}}
\caption{The behavior of function $F(u,v)$ at $T=0.5$.}
\label{convecF}
\end{center}
\end{figure}

As can be seen, $F(u,v) \ge 0$ for all values of $(u,v) \in [0,1]\times[0,1]$. Therefore, when approximating the term $-F(u,v) \pd_v$. a \textit{backward} approximation should be used to obtain the Metzler matrix, i.e. $-F(u,v) A^{1B}_{2,v}$. It can be checked that in case $r < q$ the \textit{forward} approximation should be used as then $F(u,v) \le 0$.

Finally, the resulting matrices $A_{\tilde\calL_u}, A_{\tilde\calL_v},$ are, by construction, also the Metzler matrices, as each of them is a sum of the Metzler matrices, \cite{BermanPlemmons1994}.

It is easy to prove that the matrix $\exp(A_{\tilde\calL^{(k-1)}})$ is the second order approximation of the operator $e^\calM$ in the spatial steps $h_u, h_v$ on the grid, i.e. $O(h_u^2 + h_v^2 + h_u h_v)$. That follows from the fact that the derivatives are approximated by using \eqref{approxD}, where every line is the second order approximation to the corresponding partial derivative.

\subsection{Convergence of Picard iterations}

The remaining point to prove is the convergence of the fixed-point Picard iterations. Again, in doing so we follow \cite{Itkin2018}. According to \eqref{kExp}, change of variables in \eqref{transform}, and \eqref{pdeNewOp1} we run the iterations
\begin{align} \label{kExp1}
\sig^{(k)}(T+\Delta T, u, v) = e^{\dT \tilde\calL^{(k-1)}} \sig^{(k-1)}(T, u, v), \qquad k > 1.
\end{align}

By construction, at every iteration $k$ the operator $\tilde\calL^{(k)}$ is linear, continuous and  bounded at the computational domain. Also, the operator $\calE = \exp[\dT \tilde\calL^{(k)}]$ is Lipschitz
\[
\Big\| \calE \sig_1(T, u, v) - \calE \sig_2(T, u, v) \Big\| \le \Big\| \calE \Big\| \|\sig_1(T, u, v) - \sig_2(T, u, v) \| \le \|\sig_1(T, u, v) - \sig_2(T, u, v)\|, \]
\noindent since  we constructed discretization of $\calE$ as $e^M$ with $M$ being the Metzler matrix, and it was shown earlier that in the spectral norm $\|e^M\| \le 1$.

Now, to prove convergence of the Picard iterations denote by $\hat{\sig}(T+\Delta T,u,v)$ the exact solution of \eqref{pdeNewOp} at time $T+\Delta T$. Then,  by the mean-value theorem for operators, we have
\begin{align*}
\Big\| \sig^{(k)}(T&+\Delta T, u, v) - \hat{\sig}(T+\Delta T, u, v) \Big\|
= \Big\| \calE \sig(T, u, v)  - \hat{\calE}l \sig(T, u, v) \Big\| \\
&\le \| \mathbb{D}(\calE)(\xi^{(k}))\| \cdot  \Big\| \sig^{(k-1)}(T+\Delta T, u, v) - \hat{\sig}(T, u, v) \Big\| \nonumber
\end{align*}
\noindent where $\xi^{(k)}$ is a convex combination of $\sig^{(k-1)}(T+\Delta T, u, v)$ and $\sig^{(k)}(T+\Delta T, u, v)$, and $\mathbb{D}$ denotes the \frechet derivative of the operator $\calE$ at the space of all bounded non-linear operators, see, e.g., \cite{hutson2005applications,LiLuWangMcCammon2010}. Thus, the iterations converge if $\| \mathbb{D}(\calE) \| < 1$. We prove that actually this is the case for the operators in \eqref{splitting} in Appendix~\ref{app2}. In particular, we show there that the iterative scheme is "almost" unconditionally stable, where "almost" means for the values of the model parameters reasonable from the practical point of view.

All the results obtained in this Section can be further summarized in the following proposition
\begin{proposition} \label{Prop1}
The scheme \eqref{kExp} with approximation \eqref{kExp1} is a) (almost) unconditionally stable, b) preserves non-negativity of the solution, c) provides second order approximation of \eqref{kExp} in space on the grid $\mathbb{G}$, and d) converges.
\end{proposition}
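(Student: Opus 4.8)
The plan is to prove the four assertions by first isolating the \emph{frozen‑coefficient} (linear) properties of each fractional step in \eqref{fwSplitFin}, then assembling them through the Strang splitting, and finally closing the nonlinear part via the Banach fixed‑point theorem. Parts (b) and (c) are structural consequences of the construction in Section~\ref{solPDE}; parts (a) and (d) rest on the contraction estimate already sketched before the statement, together with Appendix~\ref{app2}.

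I would first dispatch (b) and (c). For non‑negativity: by construction $A_{\tilde\calL_u^{(k-1)}}$ and $A_{\tilde\calL_v^{(k-1)}}$ are Metzler matrices (each is a sum of Metzler matrices, per the discussion after \eqref{splitting}), and the mixed‑derivative matrix built in Appendix~\ref{app1} is the negative of an EM‑matrix (eventually $M$-matrix, \cite{ItkinBook}); hence $e^{\Delta T\,\tilde\calL_{uv}^{(k-1)}}$ and the half‑step exponentials are entrywise nonnegative, and so is their product \eqref{fwSplitFin}. When each factor is realized by the \pade$(1,1)$ scheme \eqref{CN}, one writes $e^{\dT A}\approx (I-\tfrac14\dT A)^{-1}(I+\tfrac14\dT A)$: since $I-\tfrac14\dT A$ is a nonsingular $M$-matrix its inverse is nonnegative, while $I+\tfrac14\dT A$ is nonnegative once $\dT$ does not exceed a mild bound on $|a_{ii}|$ — which is exactly the origin of the qualifier ``almost'' in (a). For (c): every continuous derivative in \eqref{uvOp} is replaced by one of the second‑order formulas in \eqref{approxD}, and the mixed term is discretized to second order in Appendix~\ref{app1}, so on smooth data $A_{\tilde\calL^{(k-1)}}=\tilde\calL^{(k-1)}+O(h_u^2+h_v^2+h_uh_v)$; since the exponential map is locally Lipschitz in operator norm (Duhamel), $e^{\dT A_{\tilde\calL^{(k-1)}}}=e^{\dT\tilde\calL^{(k-1)}}+O(h_u^2+h_v^2+h_uh_v)$, and the Strang ordering in \eqref{fwSplitFin}, being itself $O(\dT^2)$ and using \pade$(1,1)$ at each fraction, does not degrade this spatial order.

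Next I would treat (a) and (d) together. The relevant map is $\Phi:\sig\mapsto\calE[\sig]\,\sig$, where $\calE[\sig]=\exp[\dT\,\tilde\calL^{(k-1)}]$ is linear in $\sig$ but depends on $\sig$ through the coefficients $\tfrac{1}{2T}\sig^2K_1^2\sig_u^2$, $\tfrac{1}{2T}\sig^2K_1^2\sig_v^2$, $-\tfrac1T\sig^2K_1^2\sig_u\sig_v$ in \eqref{splitting}. Stability of the inner linear solve is immediate from the Metzler structure established above: all eigenvalues of $A_{\tilde\calL^{(k-1)}}$ have nonpositive real part, hence $\|e^{A_{\tilde\calL^{(k-1)}}}\|\le 1$ in the spectral norm and $\calE$ is $1$-Lipschitz, which is the (almost) unconditional stability of the frozen‑coefficient step. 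For convergence of the Picard iteration \eqref{kExp1}, the mean‑value theorem for operators reduces matters to $\|\mathbb{D}(\Phi)\|<1$ on the computational domain, with $\mathbb{D}$ the \frechet derivative; since $\|\calE\|\le 1$, it suffices to bound the derivative of $\calE[\sig]$ with respect to $\sig$, which enters through the nonlinear coefficients and through $\mathbb{D}(e^M)[H]=\int_0^1 e^{sM}H\,e^{(1-s)M}\,ds$. That bound is carried out in Appendix~\ref{app2}, and Banach's theorem then yields (d).

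The main obstacle is precisely this last estimate. Because $\tilde\calL^{(k-1)}$ is genuinely quasilinear, its coefficients contain first derivatives of the unknown, so $\mathbb{D}(\Phi)$ acquires terms in which differentiation in $u$ or $v$ falls on the perturbation — a priori an unbounded contribution — and one must show it is controlled by the smoothing of $e^{\dT\tilde\calL^{(k-1)}}$ together with the size of $\dT$ and of the coefficients; this is where the honest (hence only ``almost'' unconditional) restriction to practically reasonable model parameters is unavoidable. A secondary delicate point, deferred to Appendix~\ref{app1}, is constructing a second‑order discretization of the mixed derivative whose matrix is the negative of an EM‑matrix, since the usual nine‑point stencil violates this sign pattern; the novelty over \cite{Itkin2017} is that the cross term here has the non‑separable form $\rho\,f(u,v)\,\pd_u\pd_v$ rather than $\rho\,f(u)g(v)\,\pd_u\pd_v$, so one must verify that the approach of \cite{Itkin2017} still applies.
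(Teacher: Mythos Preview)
Your proposal is correct and follows essentially the same route as the paper: parts (b) and (c) come from the Metzler/EM-matrix structure of the discrete operators and the second-order stencils in \eqref{approxD} and Appendix~\ref{app1}, part (a) from the spectral bound $\|e^M\|\le 1$ for Metzler $M$, and part (d) from the Picard/Banach argument with the \frechet derivative estimate deferred to Appendix~\ref{app2}. The one place where you deviate slightly is in attributing the qualifier ``almost'' in (a) to a positivity constraint on the \pade$(1,1)$ factor $I+\tfrac14\dT A$; in the paper the ``almost'' arises solely from the convergence analysis in Appendix~\ref{app2}, where the condition for $\|\mathbb{D}(\calE)\|<1$ takes the form $2D\,\dT/h^2>1$ (the reverse of an explicit CFL condition), which is satisfied for practically reasonable parameter values --- your second, later explanation of ``almost'' is the one the paper actually uses.
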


\section{Numerical results and comparison} \label{results}

In our numerical test we use the input parameters shown in Table~\ref{inputs}.
\begin{table}[!htb]
\begin{center}
\begin{tabular}{|c|c|c|c|c|c|c|c|c|}
\hline
\rowcolor[rgb]{0,1,1}
$S$ & $r$ & $q$ & $T_{max}$ & $K_{\max}$ & Call/Put & $N_u$ & $N_v$ & $R_c$ \\
\hline
100.00 & 0.05 & 0.01 & 1.00 & 400.00 & Call & 200 & 100 & 20  \\
\hline
\end{tabular}
\caption{Parameters of the test.}
\label{inputs}
\end{center}
\end{table}
We build a non-uniform grid ${\mathbf G}(x)$ in the $(u,v)$ space, which contains $N_u$ nodes in the $u$ direction and $N_v$ nodes in the $v$ direction. Since the  gradients of the implied volatility $\sig(T,u,v)$ are high at the boundaries of the computational domain, i.e. close to $u = 0,1$ and $v=0,1$ it makes sense to make the grid denser close to these areas. Since from practical purposes the far boundary is of a less interest, we compress the grid close to the lines $u=0, \ v=0$. This is obtained by using the compression ratio $R_c=50$, see the description in \cite{ItkinBook}. Thus constructed grid is represented in Fig.~\ref{grid}. With the initial parameters in Table~\ref{inputs}, one step in the $v$ direction approximately corresponds to a change in strike by 3\$, and one step in the $u$ direction approximately corresponds to a 50 cents change in both the options price and the strike.
\begin{figure}[!htb]
\captionsetup{format=plain}
\begin{minipage}{0.4\textwidth}
\begin{center}
\onelinecaptionsfalse
\includegraphics[totalheight=2.6in]{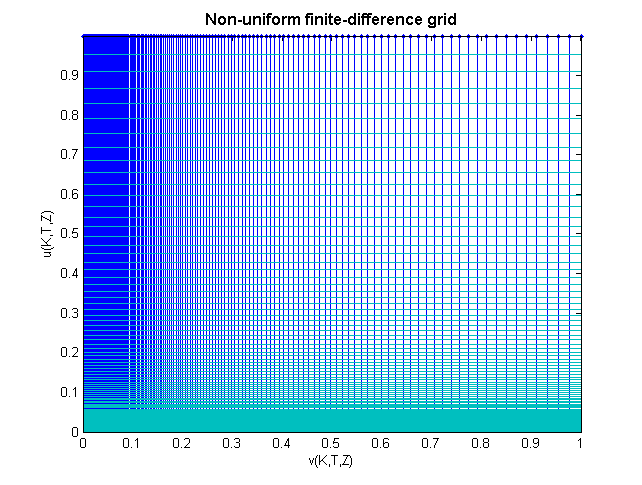}
\caption{Non-uniform finite-difference grid in $(u,v)$ space.}
\label{grid}
\end{center}
\end{minipage}
\hspace{0.1\textwidth}
\begin{minipage}{0.4\textwidth}
\begin{center}
\onelinecaptionsfalse
\includegraphics[totalheight=2.6in]{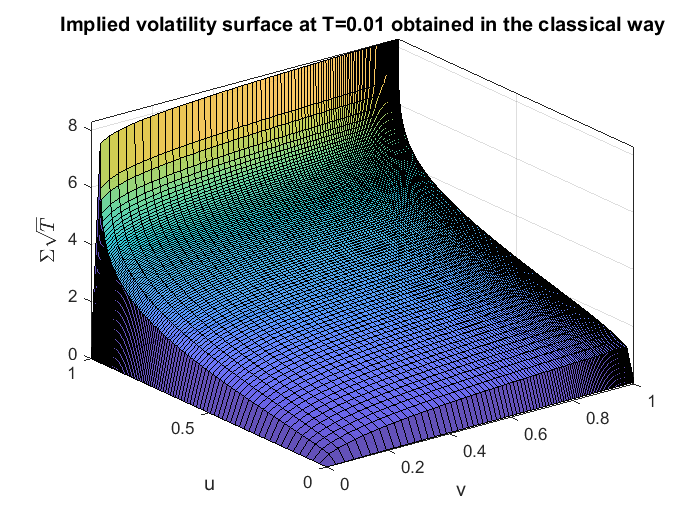}
\caption{Implied volatility $\sig(K,Z,T)$ at $T = 0.01$ computed by using the traditional approach.}
\label{bcFig}
\end{center}
\end{minipage}
\end{figure}

We solve the forward equation starting at $T=0$ and going forward in time to $T_{\max}$ with the time step $\Delta T = 0.01$. The boundary conditions are computed as this is described in Section~\ref{bc} using numerical solutions of \eqref{u0}, \eqref{v1} and \eqref{u1}. Thus found boundary values corresponding to $T = 0.01$ are presented in Fig.~\ref{bcFig} (along the lines $u=0$, $u=1$, $v=0$ and $v=1$).

The standard way to get the implied volatility by solving \eqref{def} numerically is used for comparison, and further is called "traditional".  We run this traditional approach at every point of the grid ${\mathbf G}(x)$ by using Matlab function \url{blsimpv}, or a root solver at the boundaries where \url{blsimpv} provides unsatisfactory results. All tests were run on a PC with Intel i7-4790 CPU (8 Cores, 3.6GHz) as a sequential loop, i.e. no parallel tools have been used. The total elapsed time was 113 secs for 20000 points on the grid, or, on average, 6 msc per a single point.. The results are presented in Fig.~\ref{bc}.

The grid ${\mathbf G}(x)$ doesn't change in time. On the other hand, functions $d_1(\sig), d_2(\sig)$ in the Black-Scholes formula (see \eqref{calldelta}) also almost don't depend on $T$ (since $T$ is now embedded into $\sig$). Therefore, the only dependence on $T$ in the numerical implementation of \eqref{def} is the dependence of $Q_T, D_T$ which is weak. Thus,
Fig.~\ref{bcFig} changes with time very slow. Also, for the purpose of a better illustration of the behavior of $\sig$ at intermediate values of $(u,v)$, we zoom-in Fig.~\ref{bcFig} by excluding the boundaries and setting $T=0.5$. Thus obtained plot is represented in Fig.~\ref{resTradz}.
\begin{figure}[!htb]
\begin{center}
\captionsetup{width=.75\linewidth}
\fbox{\includegraphics[totalheight=2.6in]{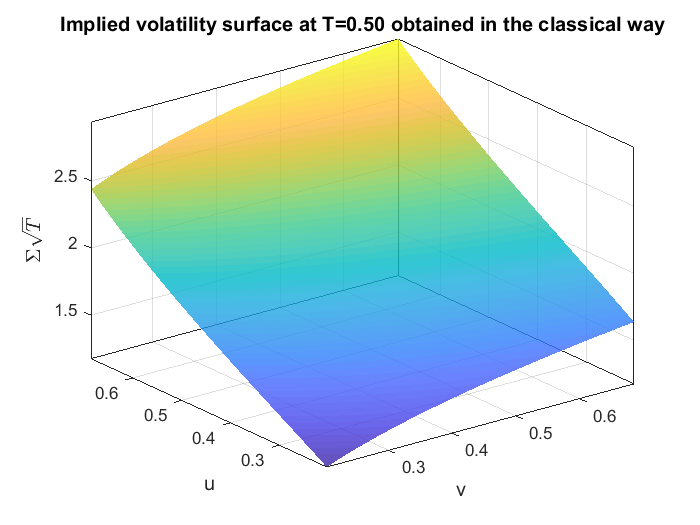}}
\caption{Implied volatility $\sig(K,Z,T)$ at $T = 0.5$ computed by using the traditional approach (zoomed-in).}
\label{resTradz}
\end{center}
\end{figure}

\subsection{First step in time} \label{firstStep}

The first step in time is the most challenging problem with our approach. Indeed, according to Section~\ref{bc}, the initial condition for the PDE in \eqref{pdeNewOp1} is $\sig(0,u,v)  = 0$ for all $u$ and $v$. This, in turn, means that all gradients $\sig_u, \sig_v$ tend to infinity at $T=0$. Also,
\begin{equation} \label{lev1}
\tilde\calL^\nl(0,u,v) = 0,
\end{equation}
\noindent where $\calL^\nl$ denotes the non-linear part of the operator $\calL$. Therefore, the iterative scheme in \eqref{pdeNewOp1} becomes incorrect. At the moment we don't have a good resolution of this problem. Therefore, instead, for $T=\dT$ we proceed by computing $\sig(\Delta T,u,v)$ in the traditional way, and denote thus obtained implied volatility as $\sig_{tr}(\Delta  T,u,v)$.

Again, the main problem here is that at $T=\dT$ we don't have a good initial guess for $\sig$ as it cannot be taken from the previous time step. But, in principle, if we would have a good initial guess for $\sig$, we could run our iterative algorithm even at $T=\dT$. To validate this, we further use $\sig_{tr}(\Delta  T,u,v)$ as the initial guess for the iterative scheme. But, based on \eqref{lev1}, for this time step we also need to modify the scheme in \eqref{pdeNewOp1} to be
\begin{align} \label{pdeNewOp1_0}
\sig_{T}^{(k)} &= \calL^{(k-1)}(T+\Delta T,K,Z) \sig^{(k-1)}  = \tilde\calL^{(k-1)} \sig^{(k-1)}.
\end{align}
Then we solve this equation iteratively, but it is easy to check that this scheme provides only the first order approximation in $\Delta T$.

The results are as follows. We do 20 iterations of the method to obtain the relative error
\[ \varepsilon = \frac{\|\sig_{tr}(\Delta  T,u,v) - \sig(\Delta T,u,v)\|}{\|\sig_{tr}(\Delta  T,u,v)\|} \]
\noindent to be 1.5 bps. The elapsed time for doing so is 16 secs, or 0.8 secs per iteration. The difference between $\sig_{tr}(\Delta  T,u,v)$ and our numerical solution $\sig(\Delta  T,u,v)$ is presented in Fig.~\ref{res1}. The graph is truncated up to $u=v=0.8$ because very large values of strikes don't have practical interest.
\begin{figure}[!htb]
\begin{center}
\captionsetup{width=.75\linewidth}
\fbox{\includegraphics[totalheight=2.6in]{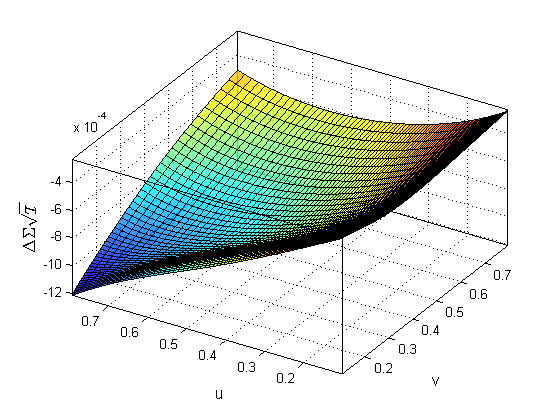}}
\caption{The difference $\Delta \sig = \sig_{tr}(\Delta  T,u,v) - \sig(\Delta T,u,v)$ as a function of $u,v$ at $T = 0.01$ after 20 iterations.}
\label{res1}
\end{center}
\end{figure}

\subsection{Next steps in time}

Having good (not all zeros) values of $\sig(\Delta T,u,v)$ we can proceed based on the general scheme described in Section~\ref{solPDE}. We increase time to $T=2\Delta T$, iterate to get the numerical solution, then compute $\sig_{tr}(2\Delta T,u,v)$, compare these solutions, etc. In Fig.~\ref{res2} these results are presented for $T = 2\Delta T, 5\Delta T, 10\Delta T, 20\Delta T$.
\begin{figure}[!htb]
\begin{center}
\captionsetup{width=.95\linewidth}
\fbox{\includegraphics[width=.45\textwidth]{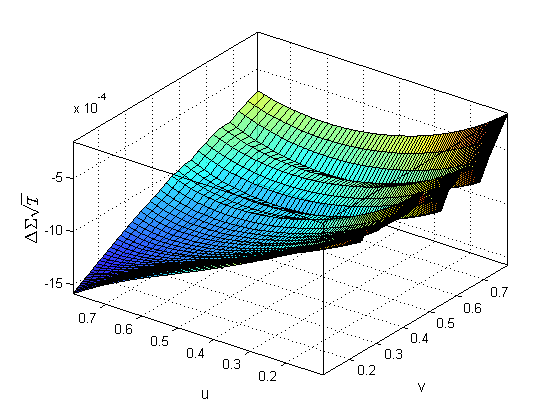}} \quad
\fbox{\includegraphics[width=.45\textwidth]{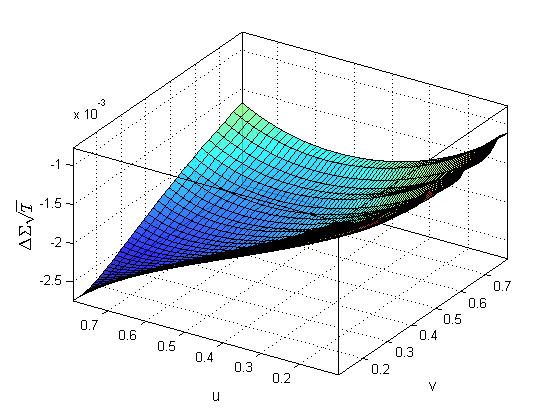}} \\
\vspace{0.1in}
\fbox{\includegraphics[width=.45\textwidth]{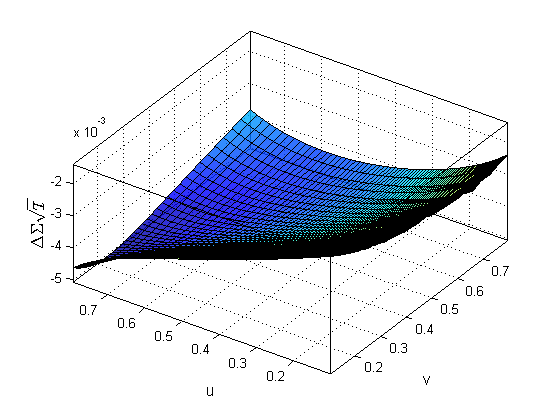}} \quad
\fbox{\includegraphics[width=.45\textwidth]{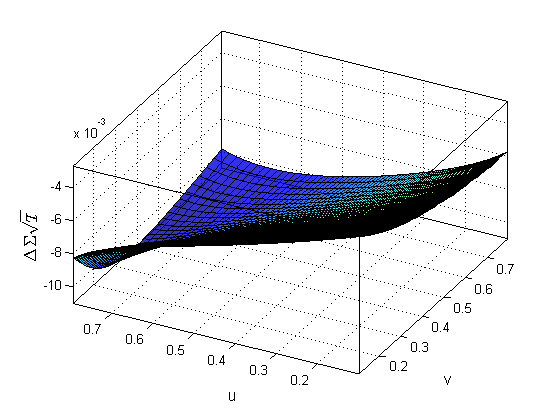}} \\
\caption{The difference $\Delta \sig = \sig_{tr}(\Delta  T,u,v) - \sig(\Delta T,u,v)$ as a function of $u,v$ at $T = 0.02$ (upper-left), $T = 0.05$ (upper-right), $T = 0.1$ (bottom-left) and $T = 0.2$ (bottom-right).}
\label{res2}
\end{center}
\end{figure}

It turns out that at those time steps the iterations converge much faster as compared with the test for the first step reported in Section~\ref{firstStep}. This convergence is summarized in Table~\ref{conv}. This partially could be explained by the order of approximation in time, since at the first step the scheme was of the order one, while at the next steps it is of the order 2.
\begin{table}[!htb]
\begin{center}
\begin{tabular}{|r|r|r|r|}
\hline
\rowcolor[rgb]{0,1,1}
Step in time & Iterations per step & $\varepsilon \ \ $ & $t$ elapsed, secs \\
\hline
1 & 20 & 1.5e-4 & 16 \\
\hline
2 & 3 & 2.3e-6 & 2.7 \\
\hline
3 & 3 & 1.9e-6 & 2.8 \\
\hline
5 & 3 & 1.4e-6 & 2.8 \\
\hline
10 & 3 & 8.3e-7 & 2.8 \\
\hline
20 & 3 & 3.8e-7 & 2.8 \\
\hline
\end{tabular}
\caption{Convergence of the Picard iterations for various steps in time $T$. The time step $\Delta T$ = 0.01.}
\label{conv}
\end{center}
\end{table}

As this can be seen from the presented results, the convergence of the method slightly improves with time $T$, the elapsed time remains the same, while the relative error between the traditional method and the PDE solver slightly increases with $T$. The later can obviously be explained by accumulation of the approximation errors at every time step with the increase of $T$. However, for instance, for $T=0.2$ this error is still acceptable (to be 20-50 bps). On the other hand, the biggest deviation from the traditional results corresponds to high $u$ and $v$. In particular, in our numerical example the point $u=v=0.8$ maps to the strike $K = 260\$ $ which, in turn, corresponds to the moneyness $S/M = 0.385$ and the option price 20\$. Hence, this is a deep OTM option, and the corresponding implied volatility is very high, 2.61. Therefore, this case is almost impractical.

\section{Discussion} \label{discussion}

In this paper we derive a backward and forward quasilinear parabolic PDEs that govern the implied volatility of a contingent claim whenever the latter is well-defined. This would include at least any contingent claim written on a positive stock price whose payoff at a possibly random time is convex. Alternative to that we also derived a forward nonlinear hyperbolic PDE of the first order which also governs evolution of the implied volatility in $(K,T,Z)$ space.
We also discuss suitable initial and boundary conditions for those PDEs. Finally, we demonstrate how to solve them numerically by using an iterative method. All these results are new.

We compare performance of two methods. The first one uses a traditional scheme of getting the implied volatilities by using a root solver and solving \eqref{def}. The second method relies on a numerical solution of the derived PDEs. We show that based on our numerical experiments, if the implied volatility is needed at every point on the rather dense grid, the PDE solver significantly outperforms the root solver. As follows from Table~\ref{conv}, the PDE solver is almost 40 times faster.

However, a few comments shoud be made with this regard. First, our method utilizes the traditional method in two ways. We use it to compute the boundary conditions, and we also use it at the first time step $T = \Delta T$ as this was discussed in Section~\ref{firstStep} since we don't have a good start value of $\sig$ for iterations to kick off. Also this drops the order of approximation in time from $O((\Delta T)^2)$ to $O(\Delta T)$ at this step. Nevertheless, all those issues are relatively minor, and for the longer time horizons our method still outperforms the traditional one.

The second obstacle is that, most likely, in practice we rarely need implied volatilities on so dense grid in both space and time. Usually, we need them only at those points in $(K,T)$ which correspond to the tradable market strikes and maturities, and they are not so dense. However, the FD method still requires a grid. Therefore, the number of points for the traditional method could be much smaller that that for the FD method. This could compensate a slower performance of the traditional method and make both methods to be comparable in the elapsed time.

Third, as was already mentioned, functions $d_1(\sig), d_2(\sig)$ in the Black-Scholes formula (see \eqref{calldelta}) weakly  depend on $T$ (since $T$ is now embedded into $\sig$), namely only via $Q_T$ and $D_T$. Therefore, the solution of \eqref{def} also depends on $T$ only via $Q_T, D_T$. Therefore, since we already computed $\sig(\Delta T,K,Z)$ by using the traditional way, then for $T=2 \Delta T$ \eqref{def} could be solved by expanding the left hands side into series on $\Delta T \ll 1$. In more detail, we write
\begin{align} \label{def2}
BS(T,K,\sig(T,K,Z)) &= Z, \\
BS(T+\Delta T,K,\sig(T+\Delta T,K,Z)) &= Z, \nonumber
\end{align}
\noindent and define $\sig(T+\Delta T,K,Z) = \sig(T,K,Z) + \Delta \sig(T,K,Z)$. Now expanding the second line of \eqref{def2} into series on $\Delta T \ll 1, \Delta \sig \ll 1$, and using the first line in \eqref{def2}, in the first order we obtain
\begin{equation} \label{dSig}
\Delta \sig(T,K,Z) = - \frac{\Theta(T,K,\sig(T,K,Z))}{\mathrm{Vega}(T,K,\sig(T,K,Z))} \sqrt{T} \Delta T.
\end{equation}
Since $\sig(T,K,Z)$ is already known, the expression in \eqref{dSig} can be easily computed.

Once all $\sig(T+\Delta T,K,Z)$ are computed in such a way, we can proceed to $T+2\Delta T$. The problem with this approach, however, is that its accuracy decreases with every time step. As an alternative, periodically at some time steps the accurate value of $\sig(T,K,Z)$ can be recomputed by using the traditional approach. However, same could be done for the PDE approach as well.

It should also be mentioned that in this paper we present results for European plain vanilla options using the forward equation, while for e.g., for American options one has to solve the backward PDE, which computationally is more expensive.

\section*{Acknowledgments}
We thank Archil Gulisashvili and Daniel Duffy for useful discussions. Any errors are our own.

\vspace{0.3in}


\newcommand{\noopsort}[1]{} \newcommand{\printfirst}[2]{#1}
  \newcommand{\singleletter}[1]{#1} \newcommand{\switchargs}[2]{#2#1}

\appendix
\numberwithin{equation}{section}
\appendixpage
\section{FD approximation of the mixed derivative term in \eqref{splitting}} \label{app1}

In this Appendix we show how to solve the third line in \eqref{fwSplitFin} with the accuracy $O((\dT)^2)$ in time and
$O(h_u^2 + h_v^2 + h_u h_v)$ in space folloiwng the method first proposed in \cite{Itkin2017}. To recall, the equation under consideration reads
\begin{align} \label{mixEq}
\sig^{(k)} &= e^{\Delta T \tilde\calL_{uv}^{(k-1)}} \sig^{(k)}, \\
\calL_{uv} &= -\frac{1}{T}\sig^2 K_1^2 \sig_u \sig_v  \triangledown_u \triangledown_v. \nonumber
\end{align}
To achieve this accuracy in time we use the \pade approximation $(1,1)$ to obtain a Crank-Nicholson scheme (wrtten with some loose of notation to make it better readable)
\begin{equation} \label{CN1}
\left(1 - \frac{1}{4} \Delta T \tilde\calL_{uv}^{(k-1)} \right)\sig(T+\dT) =
\left(1 + \frac{1}{4} \Delta T \tilde\calL_{uv}^{(k-1)} \right)\sig(T) = \bar\sig(T).
\end{equation}
This numerical scheme is, of course, implicit due to the fact that, despite we know $\bar\sig(T)$, to find $\sig(T+\dT)$ we need to solve \eqref{CN1}. It is known that, after some discretization on the FD grid is applied to \eqref{CN1}, the matrix in the left hands side of \eqref{CN1} must be an M-matrix, to ensure that this scheme is stable and preserves the positivity of the solution. However, this is impossible by using the standard FD discretizations of the second order, in more detail see \cite{ItkinBook}.

Therefore, we rewrite \eqref{CN1} by dropping the dependence on $T$ and using a trick proposed in \cite{Itkin2017}\footnote{The trick is motivated by the desire to build an ADI scheme which consists of two one-dimensional steps, because for the 1D equations we know how to make the rhs matrix to be an EM-matrix, see \cite{Itkin2,ItkinBook} and references therein.}
\begin{align} \label{trickNeg}
\Big(P &- \sdt \rho_{u,v} U(u,v) \triangledown_u \Big)\Big(Q + \sdt V(u,v) \triangledown_v \Big) \sig  \\
	&= \bar\sig + \left[(P Q-1) - Q\sdt \rho_{u,v} U(u,v) \triangledown_u + P \sdt V(u,v)  \triangledown_v  \right]\sig. \nonumber \\
U(u,v) &= \frac{1}{\sqrt{T}}\sig K_1 \sig_u, \qquad V(u,v) = \frac{1}{\sqrt{T}}\sig K_1 \sig_v. \nonumber
\end{align}
Here $\rho_{u,v} = -1/4,$\footnote{We introduce this notation to make it easy to compare the description of the method in this paper with that in \cite{Itkin2017}.} $P,Q,$ are some positive numbers which have to be chosen based on some conditions, e.g., to provide diagonal dominance of the matrices in the parentheses in the lhs of \eqref{trickNeg}, see below. Also the
intuition behind this representation is discussed in detail in \cite{Itkin2017}.

Note, that in \cite{Itkin2017} we considered a mixed derivative term of the form $\rho_{u,v}f(v)g(u) \partial_u \partial_v$, where $f(u)$ is some function of $u$ only, and $g(v)$ is some function of $v$ only. Here we are dealing with a more general term $\rho_{u,v}f(u,v) \partial_u \partial_v$, so the presented approach is a generalization of that in \cite{Itkin2017}. Also our method in this paper is of the second order in time while the method in \cite{Itkin2017} is of the first order.

The \eqref{trickNeg} can be solved using fixed-point Picard iterations. One can start with setting $\sig^0 = \bar\sig$ in the right hands side of \eqref{trickNeg}, then solve sequentially two systems of equations
\begin{align} \label{spl2}
\Big(Q &+ \sdt V(u,v) \triangledown_v\Big) \sig^*
	= \bar\sig + \Big[P Q -1  - Q\sdt \rho_{u,v} U(u,v) \triangledown_u
	+ P \sdt V(u,v) \triangledown_v  \Big]\sig^j, \nonumber \\
\Big(P &- \sdt \rho_{u,v} U(u,v) \triangledown_u \Big)\sig^{j+1} = \sig^*.
\end{align}
Here $\sig^j$ is the value of $\sig$ at the $j$-th iteration\footnote{The reader shouldn't miss these iterations (marked by $j$) used to solve \eqref{mixEq} with those (marked by $k$) for the solution of the nonlinear equation which are discussed in Section~\ref{solPDE}.}.


To solve \eqref{spl2} we propose two FD schemes. The first one (Scheme~A) is introduced by the following Propositions\footnote{For the sake of clearness we formulate this Proposition for the uniform grid, but it is transparent how to extend it for the non-uniform grid.}:

\begin{proposition} \label{propPos}
Let us approximate the left hands side of \eqref{spl2} using the following finite-difference scheme:
\begin{align} \label{fd1}
\Big(Q I_v + \sdt V(u,v) A^{2B}_{1,v}\Big) \sig^* 	&= \alpha^{+}\bar\sig  - \sig^j, \\
\Big(P I_u - \sdt \rho_{u,v} U(u,v) A^{2B}_{1,u} \Big)\sig^{j+1} &= \sig^*, \nonumber	\\
\alpha^{+} = (PQ + 1)I &- Q\sdt \rho_{u,v} U(u,v) A^{1F}_{1,u} + P\sdt V(u,v) A^{1F}_{1,v}. \nonumber
\end{align}
Then this scheme is unconditionally stable in time step $\dT$, approximates \eqref{spl2} with $O(\sqrt{\dT}\max(h_u,h_v))$ and preserves positivity of the matrix $\sig(u,v,T)$ if $Q = \beta \sdt/h_v, \ P = \beta \sdt/h_u$, where $h_v, h_u$ are the grid space steps correspondingly in $v$ and $u$ directions, and the coefficient $\beta$ must be chosen to obey the condition:
\[
\beta > \max_{u,v}[V(u,v) + \rho_{u,v} U(u,v)].
\]
\end{proposition}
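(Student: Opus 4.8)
The plan is to verify the three claimed properties of Scheme~A --- unconditional stability, the stated order of approximation, and preservation of positivity --- by reducing each to an algebraic statement about the factor matrices $Q I_v + \sdt V A^{2B}_{1,v}$, $P I_u - \sdt \rho_{u,v} U A^{2B}_{1,u}$, and $\alpha^+$, using the fact that for a Metzler/M-matrix $M$ one has $\|e^{M}\| < 1$ and $e^{M}\ge 0$ entrywise (invoked in Section~\ref{solPDE}). First I would fix the grid (uniform, as the footnote permits) and write out each one-sided difference operator explicitly as a bidiagonal matrix: $A^{1F}_{1,x} = (\text{shift}_{+} - I)/h$, $A^{1B}_{1,x} = (I - \text{shift}_{-})/h$, and $A^{2B}_{1,x} = A^{1B}_{1,x}A^{1B}_{1,x}$. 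Plugging $P = \beta\sdt/h_u$, $Q = \beta\sdt/h_v$ into the left-hand matrices, I would show each is diagonally dominant with positive diagonal and nonpositive off-diagonals --- i.e. an M-matrix --- which requires exactly the inequality $\beta > \max_{u,v}[V(u,v) + \rho_{u,v}U(u,v)]$ once one tracks the signs (recall $\rho_{u,v} = -1/4 < 0$ and $U,V \ge 0$ by \eqref{sigU}, so $V + \rho_{u,v}U$ can be of either sign but the dominant competitor for $\beta$ is the larger of the two coefficient magnitudes appearing on the two sweeps). Because each factor is an M-matrix, its inverse is nonnegative, so the composition of the two solves in \eqref{fd1} maps nonnegative data to nonnegative data; combined with $\alpha^+ \bar\sig - \sig^j$ being handled through the Picard iteration (whose fixed point is the CN solution, already shown bounded), positivity of $\sig^{j+1}$ follows.

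For unconditional stability I would pass to the amplification operator: solving \eqref{fd1} for $\sig^{j+1}$ in terms of $\bar\sig$ and $\sig^j$ and taking the von Neumann symbol (or, grid-independently, bounding spectral norms of the M-matrix inverses by the reciprocal of the excess diagonal dominance), I would show the iteration matrix has norm bounded by a constant $<1$ uniformly in $\dT$, so no CFL-type restriction on $\dT$ arises --- the $\sdt/h$ scaling of $P,Q$ is precisely what makes the $\sdt$ in front of the convection-like one-sided terms cancel against the $h$ in the denominators, leaving the stability constant independent of both $\dT$ and $h$. The order-of-approximation claim $O(\sqrt{\dT}\max(h_u,h_v))$ I would get by Taylor-expanding: the exact term being approximated is $\Delta T\,\tilde\calL_{uv} = -\sdt^2\,\rho_{u,v}\cdot(\text{const})\cdot U V\,\triangledown_u\triangledown_v$ wait --- more carefully, the CN scheme \eqref{CN1} factored as \eqref{trickNeg} introduces a splitting error of size $\sdt^2 \times (\text{commutator of the two one-dimensional pieces})$, and each one-sided difference $A^{2B}_{1,\cdot}$ differs from the true second derivative by $O(h)$ rather than $O(h^2)$; multiplying the $O(h)$ spatial error by the $\sdt$ that accompanies $U,V$ gives the $O(\sqrt{\dT}\,h)$ term, while the splitting/temporal error is higher order in $\dT$ and absorbed. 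I would make this precise by expanding both sides of \eqref{fd1} around the exact solution of \eqref{spl2} and collecting the lowest-order residual.

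The main obstacle I expect is the sign bookkeeping that yields the exact constant in the condition on $\beta$: one must simultaneously ensure strict diagonal dominance of \emph{both} $Q I_v + \sdt V A^{2B}_{1,v}$ and $P I_u - \sdt\rho_{u,v}U A^{2B}_{1,u}$ \emph{and} that $\alpha^+$ enters the Picard map with a contraction factor, and because $U,V$ depend on $(u,v)$ through $\sig$ and its gradients (which blow up near $T=0$, as flagged in Section~\ref{firstStep}) the ``uniform in the grid'' part of the estimate is delicate --- one really needs $\beta$ chosen from the \emph{maximum} over the grid, and one should check the argument degrades gracefully (or note the $T>0$ restriction) where $\sig_u,\sig_v$ are large. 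A secondary subtlety is that $A^{2B}_{1,\cdot}$ is a genuine one-sided second-difference (built from $A^{1B}_{1,\cdot}A^{1B}_{1,\cdot}$), so it is tridiagonal-but-shifted, not symmetric; I would confirm it is still Metzler (nonpositive diagonal, nonnegative off-diagonal) after multiplication by the nonnegative coefficient $\sdt V$, which it is, so the M-matrix structure of the full left-hand factor survives. Everything else --- the Taylor expansions, the norm bounds for M-matrix inverses, the convergence of the inner Picard loop via the Banach fixed-point theorem already cited --- is routine once the $\beta$-condition is pinned down.
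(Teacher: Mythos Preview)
The paper itself does not give a proof of this proposition: it simply writes ``The proof can be found in \cite{Itkin2017,ItkinBook}.'' So there is no in-paper argument to compare against, and your outline --- M-matrix/diagonal-dominance analysis of the two one-dimensional factors for positivity, spectral/von Neumann bounds for unconditional stability, Taylor expansion for the $O(\sqrt{\dT}\,h)$ consistency --- is the standard route and almost certainly the one in the cited references.

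That said, there is one concrete slip in your reading of the notation that would change the sign bookkeeping you flag as the ``main obstacle.'' You write $A^{2B}_{1,x} = A^{1B}_{1,x}A^{1B}_{1,x}$ and later call it ``a genuine one-sided second-difference.'' In the paper's convention (see the paragraph just above \eqref{approxD} and the footnote there), the superscript ``$2B$'' means \emph{second-order backward approximation}, and the subscript ``$1,x$'' means \emph{first derivative in $x$}. So $A^{2B}_{1,x}$ is the three-point one-sided first-derivative stencil with weights $\alpha_{-2},\alpha_{-1},\alpha_0$ from \eqref{approxD}, not a product of two first-order backward differences and not a second-derivative operator. On a uniform grid this gives row entries $(\tfrac{1}{2h},-\tfrac{2}{h},\tfrac{3}{2h})$ at positions $(-2,-1,0)$; the $+\tfrac{1}{2h}$ at position $-2$ means $QI_v + \sdt V A^{2B}_{1,v}$ is not a strict M-matrix, and the positivity/stability argument has to go through the ``eventually M-matrix'' machinery the paper alludes to in its footnote (and develops in \cite{ItkinBook}) rather than plain diagonal dominance. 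Your overall plan survives, but the factor-by-factor sign check you sketch would need to be redone with the correct stencil, and the $\beta$ threshold then emerges from the condition that the positive $\alpha_{-2}$ contribution is dominated rather than from a bidiagonal calculation.
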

\begin{proof}
The proof can be found in \cite{Itkin2017,ItkinBook}.
\end{proof}
The computational scheme in \eqref{fd1} should be understood in the following way. At the first line of \eqref{fd1} we begin with computing the product $\sig_1 = \alpha^+ \bar\sig$. This can be done in three steps. First, the product $\sig_1 = Q\sdt \rho_{u,v} U(u,v) A^{1F}_{1,u} \bar\sig$ is computed in a loop on $v_i, i=1,...,N_v$. In other words, if $\bar\sig$ is a $N_u \times N_v$ matrix where the rows represent the $u$ coordinate and the columns represent the $v$ coordinate, each $j$-th column of $\sig_1$ is a product of matrix $Q\sdt \rho_{u,v} U(u,v) A^{1F}_{1,u}$ and the $j$-th column of $\bar\sig$. The second step is to compute the product $\sig_2 = P\sdt V(u,v) A^{1F}_{1,v} \bar\sig$ which can be done in a loop on $u_i, \ i=1,...,N_u$. Finally, the right hand side of the first line in \eqref{fd1} is $(PQ+1) \bar\sig - \sig_1 + \sig_2 - \sig^j$.
Then in a loop on $u_i, \ i=1,...,N_u$, $N_u$ systems of linear equations have to be solved, each giving a row vector of $\sig^*$. The advantage of the representation \eqref{fd1} is that the product $\alpha^+ \bar\sig$ can be precomputed.

If $\sdt \approx \max(h_u,h_v)$, then the whole scheme becomes of the second order in space. However, this would be a serious restriction inherent to the explicit schemes. Therefore, we don't rely on it. Note, that in practice the time step is usually chosen such that $\sdt \ll 1$, and hence the whole scheme is expected to be closer to the second, rather than to the first order in $h$. Note, that this approach  is similar to \cite{toivanen2010,chiarella2008} for negative correlations, where a seven point stencil breaks a rigorous second order of approximation in space.

As shown in \cite{Itkin2017}, the rate of convergence of the Picard iterations is linear.

The above results can be further improved by making the whole scheme to be of the second order of approximation in $h_u$ and $h_v$. We call this FD scheme as Scheme~B.

\begin{proposition} \label{propPos2}
Let us approximate the lhs of \eqref{spl2} using the following finite-difference scheme:
\begin{align} \label{fd13}
\Big(Q I_v + \sdt V(u,v) A^{2B}_{1,v}\Big) \sig^* 	&= \alpha^+_2 \bar\sig - \sig^j, \\
\Big(P I_u - \sdt \rho_{u,v} U(u,v) A^{2B}_{1,u} \Big)\sig^{j+1} &= \sig^*, \nonumber	\\
\alpha^{+}_2 = (PQ + 1)I &- Q\sdt \rho_{u,v} U(u,v) A^{2F}_{1,u} + P\sdt V(u,v) A^{2F}_{1,v}. \nonumber
\end{align}
Then this scheme is unconditionally stable in time step $\dT$, approximates \eqref{spl2} with $O(\max(h^2_u,h^2_v))$ and preserves positivity of the vector $\sig(u,x,T)$ if $Q = \beta \sdt/h_v, \ P = \beta \sdt/h_u$, where $h_v, h_u$ are the grid space steps correspondingly in $v$ and $u$ directions, and the coefficient $\beta$ must be chosen to obey the condition:
\[
\beta > \dfrac{3}{2}\max_{u,v}[V(u,v) + \rho_{u,v} U(u,v)].
\]
The scheme \eqref{fd13} has a linear complexity in each direction.
\end{proposition}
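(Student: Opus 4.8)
The plan is to prove Proposition~\ref{propPos2} by the same four-part argument used for Proposition~\ref{propPos} (Scheme~A) in \cite{Itkin2017,ItkinBook}, the only structural change being that the first-order one-sided matrices $A^{1F}_{1,u},A^{1F}_{1,v}$ in $\alpha^{+}$ are replaced by the second-order one-sided matrices $A^{2F}_{1,u},A^{2F}_{1,v}$ (and on a non-uniform grid the stencil weights become those of \eqref{approxD}). Concretely I would check, in order: (i) consistency and the stated spatial order; (ii) that with $P=\beta\sdt/h_u$, $Q=\beta\sdt/h_v$ and the bound on $\beta$, the two one-dimensional matrices on the left of \eqref{fd13} are invertible with non-negative inverses; (iii) that the operator built from $\alpha^{+}_2$ on the right of \eqref{fd13} maps non-negative vectors to non-negative ones, so the iterates $\sig^{j}$ and their limit stay non-negative; (iv) unconditional stability through a uniform-in-$\dT$ spectral bound; and (v) linear complexity of the two sweeps.

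Consistency is the easy part. Equation \eqref{trickNeg} is, at the continuous level, an exact algebraic rewriting of the Crank--Nicolson step \eqref{CN1} for $\calL_{uv}$: expanding the product on its left and using $\sdt\cdot\sdt=\dT$, $\rho_{u,v}=-1/4$ and $UV=\frac{1}{T}\sig^2K_1^2\sig_u\sig_v$ recovers \eqref{CN1} identically. The only approximation introduced in \eqref{fd13} is therefore the replacement of $\triangledown_u,\triangledown_v$ by the matrices $A^{2B}_{1,\cdot}$ and $A^{2F}_{1,\cdot}$, each of which approximates the corresponding first derivative to $O(h^2)$ by \eqref{approxD}. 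Combined with the $O((\dT)^2)$ accuracy of the \pade $(1,1)$ step and of the Strang split \eqref{fwSplit}, this yields the claimed order. The point to stress is that, unlike Scheme~A where the $\sdt$-weighted $A^{1F}$ term makes the heuristic balance $\sdt\approx h$ necessary to reach second order in space, the use of $A^{2F}$ removes this bottleneck, so the second spatial order holds for any admissible $\dT$.

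The heart of the proof --- and the step I expect to be the main obstacle --- is the positivity analysis with the second-order one-sided stencils. On a uniform grid $A^{2B}_{1,v}$ has diagonal weight $+\frac{3}{2h_v}$ and off-diagonal weights $-\frac{2}{h_v}$ and $+\frac{1}{2h_v}$, so it is \emph{not} a Metzler matrix; the argument must run through the eventually-M-matrix (``EM-matrix'') machinery of \cite{ItkinBook} rather than a naive sign check. I would show that $Q I_v+\sdt V(u,v)A^{2B}_{1,v}$ is strictly diagonally dominant --- its diagonal entry being $\frac{\sdt}{h_v}\bigl(\beta+\frac{3}{2}V\bigr)$ against an off-diagonal absolute row sum controlled by $\sdt V$ --- and likewise for $P I_u-\sdt\rho_{u,v}U(u,v)A^{2B}_{1,u}$ using $V+\rho_{u,v}U\ge 0$ together with $U,V\ge 0$, which follow from the sign information assembled around \eqref{sigU} (recall $\rho_{u,v}=-1/4$); the EM-structure then yields non-negative inverses. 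It is exactly this computation that produces the constant $\frac{3}{2}$ in $\beta>\frac{3}{2}\max_{u,v}[V(u,v)+\rho_{u,v}U(u,v)]$, the $\frac{3}{2}$ being the magnitude of the diagonal weight of the second-order one-sided stencil; the same bound makes the $\alpha^{+}_2$-operator non-negative once the fixed-point equation for \eqref{spl2} is written out, so non-negativity of $\sig^{j}$ follows by induction on $j$ and is inherited by the limit. The non-uniform-grid version goes through verbatim with the $\alpha,\gamma,\delta$ weights of \eqref{approxD} in place of their uniform values.

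Finally, unconditional stability follows from the same diagonal-dominance estimates: since $P$ and $Q$ scale like $\sdt$, the $\dT$-dependence balances and the amplification operator $\bigl(P I_u-\sdt\rho_{u,v}U A^{2B}_{1,u}\bigr)^{-1}\bigl(Q I_v+\sdt V A^{2B}_{1,v}\bigr)^{-1}\alpha^{+}_2$ has norm at most $1$ uniformly in $\dT$, so no CFL-type restriction appears; and the Picard loop \eqref{spl2} converges linearly, as in \cite{Itkin2017}, contributing only a bounded number of inner sweeps. Linear complexity is immediate: each of the two systems in \eqref{fd13} is banded in a single spatial direction, with bandwidth fixed by the three-point one-sided stencils, and is solved by a Thomas-type elimination in $O(N_u)$ or $O(N_v)$ operations, looped over the transverse index, for a total cost $O(N_uN_v)$, i.e.\ linear in the number of grid unknowns. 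Collecting (i)--(v) establishes the proposition.
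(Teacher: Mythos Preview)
Your proposal is correct and follows essentially the same approach as the paper: the paper's own proof is simply ``See \cite{Itkin2017}'', and you have fleshed out precisely the four-part argument (consistency/order, EM-matrix positivity of the left-hand-side operators, non-negativity of the $\alpha^{+}_2$ action, unconditional stability, and linear complexity) that the cited reference supplies for Scheme~A and its second-order variant. In particular, your identification of the constant $\tfrac{3}{2}$ as the diagonal weight of the second-order one-sided stencil, and your appeal to the EM-matrix machinery of \cite{ItkinBook} rather than a naive Metzler check, are exactly the points that distinguish the Scheme~B proof from that of Proposition~\ref{propPos}.
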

\begin{proof}
See \cite{Itkin2017}.
\end{proof}

The coefficient $\beta$ should be chosen experimentally. In our experiments described in the following sections we used
\begin{equation} \label{beta}
\beta = \max_{u,v}[U(u,v) - \rho_{u,v} U(u,v)].
\end{equation}

\section{Norm of the \frechet derivative for operators in \eqref{splitting}}  \label{app2}

To compute the norm of the \frechet derivative recall that by definition
\[ \| \mathbb{D}(\calE) \| = \sup_{\nu \ne 0} \dfrac{\| \mathbb{D}(\calE)(\nu) \|}
{\| \nu \|}\]
If $\nu = (\nu_1,...,\nu_m) \in [L^\infty(-\infty,0)]^m$, then
\[ \mathbb{D}(\calE)(\nu) = \fp{\calE(\sig + k\nu)}{k}\Big|_{k=0}. \]
In other words, in this case the \frechet derivative coincides with the \gato derivative.

To find $\mathbb{D}(\calE^\nl)(\sig)$ we need some results from functional analysis, namely:
\begin{enumerate}
\item The product and quotient rules for the \frechet derivatives, \cite{schwartz1969nonlinear}.
\item The property that the \frechet derivative of a linear operator is the operator itself, \cite{schwartz1969nonlinear}.
\item Let $A \to A^r$ be the map that takes a positive definite matrix to its $r$-th power, and let $\mathbb{D}A^r$ be the \frechet derivative of this map. Then it is known that $\| \mathbb{D}A^r \| = \| r A^{r-1} \|$ if $r \le 1$ or $r \ge 2$, \cite{Bhatia2003}.
\item Triangle inequality.
\end{enumerate}

Let us denote the non-linear part of $\tilde\calL$ as $\tilde\calL^\nl$, and the linear part - as $\tilde\calL^L$.
Accordingly, $\tilde\calL = \tilde\calL^\nl + \tilde\calL^L$. Using the properties of the \frechet derivatives, we obtain
\begin{align} \label{normU}
\left\| \mathbb{D}(\dT L^\nl_u)(\sig) \right\| &\le \frac{\dT}{2 T} \left\| \left(\frac{\sig_v [c_1 (1 - u) + a v]}{a \sig_u + c_1 \sig_v}\right)^2 (2\sig + \sig^2) \triangledown_u^2
\right\|  \\
&\le \frac{\dT}{2T} \left\| \frac{\sig_v [c_1 (1 - u) + a v]}{a \sig_u + c_1 \sig_v}\right\|^2 \|\sig(2 + \sig)\|
\| \triangledown_u^2 \|. \nonumber
\end{align}
It can be seen that the first norm $\|\cdot\|_1$ in the right hands part of \eqref{normU} is bounded, and moreover, $\|\cdot\|_1 \le 2$. The second norm $\|\cdot\|_2$ is also bounded since we work on a truncated grid with $K < K_{\max}$. Therefore, the values of $\sig$ are also bounded. For instance, in our numerical experiments $\sig < 8$. For the third norm $\| \triangledown_u^2 \|$ it is well-known, \cite{MMCP89}, that at the uniform grid
\begin{equation} \label{normTr}
\| \triangledown_u^2 \| = \max_i \left[ \dfrac{4}{h_u^2}\sin^2 \dfrac{i \pi}{2(N_u+1)}\right], \quad i \in [1,...,N_u],
\end{equation}
\noindent where $N_u$ is the size of the matrix $A_{2,u}^C$, and $h_u$ is the grid step. For the uniform grid
$h_u = 1/(N_u-1)$.
Therefore,
\begin{equation} \label{stabU}
\left\| \mathbb{D}(\dT L^\nl_u)(\sig) \right\| \le 2 \dT \frac{2}{h_u^2}\|\Sigma\|^2
\end{equation}
\noindent where $\|\Sigma\|$ is the norm of matrix $A(\Sigma)$ on the grid.

A similar analysis can be provided for the norm $\left\| \mathbb{D}(\dT L^\nl_v)(\sig) \right\|$ to obtain
\begin{equation} \label{stabV}
\left\| \mathbb{D}(\dT L^\nl_u)(\sig) \right\| \le 2 \dT \frac{1}{h_v^2}\|\Sigma\|^2
\end{equation}
\noindent since
\[ \left\| \frac{\sig_u [c_1 (1 - u) + a v]}{a \sig_u + c_1 \sig_v}\right\|^2 \le 1. \]

Finally for the mixed derivative term we have
\begin{align} \label{normUV}
\left\| \mathbb{D}(-\dT L^\nl_{uv})(\sig) \right\| &\le  \frac{\dT}{T} \left\| \left(\frac{ c_1 (1 - u) + a v}{a \sig_u + c_1 \sig_v}  \right)^2 \sig_u \sig_v  \triangledown_{uv} (2 \sig + \sig^2) \right\|.
\end{align}
Similarly to \eqref{normTr},
\begin{equation} \label{normTrM}
\| \triangledown_u \triangledown_v\| = \max_{i,j} \left[ \dfrac{4}{h_u h_v}\sin \dfrac{i \pi}{2(N_u+1)}
\sin \dfrac{j \pi}{2(N_v+1)} \right], \quad i \in [1,...,N_u], \ j \in [1,...,N_v],
\end{equation}
\noindent and also
\[ \left\| \left(\frac{ c_1 (1 - u) + a v}{a \sig_u + c_1 \sig_v}  \right)^2 \sig_u \sig_v \right\| \le \frac{3}{2}. \]
Therefore,
\begin{equation} \label{stabUV}
\left\| \mathbb{D}(\dT L^\nl_{uv})(\sig) \right\| \le 2 \dT \frac{3}{h_v h_u}\|\Sigma\|^2.
\end{equation}

Finally, using these expressions, and properties of the \frechet derivatives, we obtain
\begin{align} \label{finFre}
\| \mathbb{D}(\calE)(\sig) \| &= \dT \| \calE \| \| \mathbb{D} \left(\fp{\calE}{\dT} \right)(\sig) \|
\le \dT \|\calE \|  \left( \|\Sigma\|^2 \left|\frac{1}{h_u^2} + \frac{2}{h_u^2} -
\frac{3}{h_v h_u} \right| + \| L^L \| \right),
\end{align}
\noindent where $\| L^L \|$ is the norm of the linear part of the operator $\tilde\calL_u^{(k-1)} + \tilde\calL_v^{(k-1)} + \tilde\calL_{uv}^{(k-1)}$ in \eqref{splitting}. It is easy to show that
\[
\| L^L \| \le \alpha^L \left|\frac{1}{h_u} + \frac{1}{h_v}\right|,
\]
\noindent where $|\alpha^L| \le 1$.

As we mentioned already, on the FD grid $\|\calE \|$ is a spectral norm of the matrix exponential $e^\calM$ where
\[ \| \calM \| \le \dT \left( \alpha^\nl \|\Sigma\|^2 \left|\frac{1}{h_u^2} + \frac{2}{h_u^2} -
\frac{3}{h_v h_u} \right| + \alpha^L \left|\frac{1}{h_u} + \frac{1}{h_v}\right| \right).
\]
Here $\alpha^\nl, \alpha^L$ are some terms that don't depend on $\dT, h_u, h_v$, i.e. on the FD steps in all spatial and temporal directions. Also, by construction $M = A(\calM)$ is the Metzler matrix with all eigenvalues being negative. Therefore, it can be seen that by decreasing the steps $h_u, h_v$, or increasing the step $\dT$ we decrease the norm $\| \mathbb{D}(\calE)(\sig) \|$ (as the exponential term dominates). Therefore, to decrease the norm $\| \mathbb{D}(\calE)(\sig) \|$ we can, e.g., fix the time step $\dT$, and then decrease the step $h_u$ or  $h_v$ (or both) until we get $\| \mathbb{D}(\calE)(\sig) \| < 1$.

This could be compared with the familiar stability condition for the Euler explicit FD scheme, which reads, \cite{ItkinBook}
\begin{equation} \label{stabCond}
2 D \frac{\Delta t}{h^2} \le 1,
\end{equation}
\noindent with $D$ being some diffusion coefficient. The latter condition is restrictive, because it sets the upper limit on the time step given the space step $h$. In contrast, our condition, using a loose notation, reads
\[  2 D \frac{\Delta T}{h^2} >  1. \]
Since this is a stability condition, the convergence of our iterative method is conditional. However, our condition
is far less restrictive then that in \eqref{stabCond}. Indeed, given the space step $h$ we need the time step to \textit{exceed} the value $h^2/(2D)$, while \eqref{stabCond} requires it to be \textit{less} than $h^2/(2D)$. Since $h^2$ is normally small, and $D$ is proportional to $\| \Sigma \|^2 > 1$ on the grid, it could be satisfied with no serious restrictions on $\dT$. Therefore, the iterative scheme is "almost" unconditionally stable, where "almost" means - at the values of the model parameters reasonable from the practical point of view.

\end{document}